\def\a{\alpha}
\def\G{\Gamma}
\newtheorem{theorem}{Theorem}[section]
\newtheorem{proposition}[theorem]{Proposition}
\newtheorem{corollary}[theorem]{Corollary}
\newtheorem{lemma}[theorem]{Lemma}
\newtheorem{definition}[theorem]{Definition}
\newtheorem{remark}[theorem]{Remark}
\def\cb{{\mathcal B}}
\def\�{{\mathcal C}}
\def\Ht{\widehat{H}}
\def\ce{{\mathcal E}}
\def\Sl{{\mathcal S}}
\def\bn{{\mathbb N}}
\def\br{{\mathbb R}}
\def\ot{\otimes}
\def\Ot{\bigotimes}
\def\Og{\Omega}
\def\a{\alpha}
\def\ffi{\varphi}
\def\Lb{\Lambda}
\def\G{\Gamma}
\def\og{\omega}
\def\El{\mathcal E}
\def\Al{\mathcal{A}}
\def\Hl{\mathcal{H}}
\def\L{\Lambda}
\def\Rl{\mathcal{R}}
\def\ffi{\varphi}
\def\Bl{\mathcal{B}}
\def\Tr{\mathrm{Tr}}
\def\<{\langle}
\def\>{\rangle}
\def\1{\mathbf{1}}
\def\cal{\mathcal}
\def\no{\nonumber\\}
\def\s{\sigma}
\def\id{{\bf 1}\!\!{\rm I}}
\begin{document}
 \begin{center}
{\Large {\bf Quantum Markov States on Cayley trees}}\\[1cm]
\end{center}

\begin{center}
{\large {\sc Farrukh Mukhamedov}}\\[2mm]
\textit{ Department of Mathematical Sciences,\\
College of Science, United Arab Emirates University,\\
P.O. Box, 15551, Al Ain, Abu Dhabi, UAE}\\
E-mail: {\tt far75m@yandex.ru, \ farrukh.m@uaeu.ac.ae}
\end{center}

\begin{center}
{\sc Abdessatar Souissi}\\
\textit{$^1$
College of Business Administration,\\
Qassim university, Buraydah, Saudi Arabia}\\
\textit{$^2$
 Preparatory Institute for Scientific and Technical Studies\\
Carthage University, Carthage,   Tunisia}\\
 
E-mail: {\tt s.abdessatar@hotmail.fr,  \  a.souaissi@qu.edu.sa}\\
\end{center}

\begin{abstract}
It is known that any locally faithful quantum Markov state (QMS) on one dimensional setting can be
considered as a Gibbs state associated
with Hamiltonian with commuting nearest-neighbor interactions. In our previous results, we have
investigated quantum
Markov states (QMS) associated with Ising type models with competing interactions, which are
expected to be QMS, but up to now,
there is no any characterization of QMS over trees. We notice that these QMS do not have
one-dimensional analogues, hence results
of related to one dimensional QMS are not applicable. Therefore, the main aim of the present paper is
to describe of QMS over Cayley trees.
Namely, we prove that any QMS (associated with localized conditional expectations) can be realized
as integral of product states w.t.r. a Gibbs measure.
Moreover, it is established that any locally faithful QMS associated with localized conditional
expectations can be considered as a Gibbs state
  corresponding to Hamiltonians (on the Cayley tree) with commuting competing interactions.

\vskip 0.3cm \noindent {\it Mathematics Subject Classification}:
46L53, 60J99, 46L60, 60G50, 82B10, 81Q10, 94A17.\\
{\it Key words}: Quantum Markov state; localized; Cayley tree; disintegration; Ising
type model; chain.
\end{abstract}

\section{Introduction }\label{intr}

It is known that \cite{BR},  in quantum statistical mechanics, concrete
systems are identified with states on corresponding algebras. In
many cases, the algebra can be chosen to be a quasi--local algebra
of observables. The states on these algebras satisfying
Kubo--Martin--Schwinger  (KMS) boundary condition, as is known, describe
equilibrium states of the quantum system under consideration. On the
other hand, for classical systems with the finite radius of
interaction, limiting Gibbs measures are known to be Markov random
fields, see e.g. \cite{D, Geor, [Pr]}. In connection with this, there
is a problem of constructing analogues of non commutative Markov
chains, which arises from quantum statistical mechanics and quantum
field theory in a natural way \cite{DM}. This problem was firstly explored in
\cite{[Ac74f]} by introducing non commutative Markov chains on the
algebra of quasi--local observables. The reader is
referred to \cite{AF03}--\cite{AcLi},\cite{fannes,G,ILW,Kum} and the
references cited therein, for recent development of the theory of
quantum stochastic processes and their applications.

The investigation of a particular class of quantum Markov chains,
called \textit{quantum Markov states (QMS)}, was pursued in \cite{AF03, AcLi,AW}, where
connections with properties of the modular operator of the states
under consideration were established \cite{[AcFr80],GZ}. This provides natural
applications to temperature states arising from suitable quantum
spin models, that is natural connections with the KMS boundary
condition.\footnote{Most of the states arising from Markov processes
considered in \cite{aklt,fannes,fannes2} describe ground states (i.e. states at zero
temperature) of suitable models of quantum spin chains.}

In \cite{AF03}, the most general one dimensional quantum Markov
states have been considered. Among the other results concerning the
structure of such states, a connection with classes of local
Hamiltonians satisfying certain commutation relations and quantum
Markov states has been obtained. The situation arising from quantum
Markov states on the chain, describes one dimensional models of
statistical mechanics with mutually commuting nearest neighbor
interactions. Namely, one dimensional quantum Markov states are very
near to be (diagonal liftings of) ``Ising type'' models, apart from
noncommuting boundary terms.

One of the basic open problems in quantum probability is the
construction of a theory of quantum Markov fields, that are quantum
processes with multi-dimensional index set \cite{[AcFi01b]}. This program concerns
the generalization of the theory of Markov fields (see
\cite{D},\cite{Geor})) to a non-commutative setting, naturally
arising in quantum statistical mechanics and quantum field theory. 

First attempts to construct quantum analogues of
classical Markov fields have been done in
\cite{[AcFi01a]}-\cite{[AcFi01b]},\cite{AcLi,[Liebs99]}. In
these papers the notion of {\it quantum Markov state}, introduced in
\cite{[AcFr80]}, extended to fields as a sub-class of the quantum
Markov chains. In \cite{[AcFiMu07]} a more general definition of
quantum Markov states and chains, including all the presently known
examples, have been extended. In the mentioned papers
quantum Markov fields were considered over multidimensional integer
lattice $\mathbb{Z}$ which, due to the existence of loops, did not allow to construct explicit
examples of such kind of fields. It is known \cite{Bax,[Sp75]} that explicit Gibbs measures can be
obtained on regular trees, therefore, in \cite{AOM,MBS162}, quantum Markov fields (or quantum Markov chains
(QMC)) has been constructed over such trees. Moreover, certain concrete examples were provided. This
direction opened a new direction in the study and construction of QMC via investigation of lattice
models on trees \cite{AMSa0}-\cite{AMSa3},\cite{MBS161}. Mostly, the existing works based on certain models over
the Cayley trees (or Bathe lattices) \cite{Ost}. In fact, even if several definitions of
quantum Markov fields on trees (and more generally on graphs) have been proposed, a really
satisfactory, general theory is still missing and physically interesting examples of such fields
in dimension $d\geq 2$ are very few.

On the other hand, taking into account results of \cite{AF03} any QMS (one dimensional setting) can
be considered as Gibbs state associated with Hamiltonian with commuting nearest-neighbor
interactions. The models considered in \cite{AMSa3, MBS161,MBS162,MR1} satisfy this type of
condition, and hence, roughly speaking, QMC considered there, are expected to be QMS, but up to now,
there is no any characterization of QMS over trees. On the other hand, those QMC do not have
one-dimensional analogues, hence results of \cite{AF03} are not applicable. Therefore, main aim of
the present paper, is to fill this gap, i.e. we are going to describe of QMS over Cayley trees. This
will allow us, in our further investigations, to distinguish which QMC may satisfy KMS conditions
(see \cite{AcLi,GZ}).

We emphasize that any notion of Markovianity strongly depends on an
underlying notion of localization and it is known that, both in the classical
and the quantum case, if the localization is sufficiently rough, then any state
can be considered as Markov chain. Therefore, if one considers the 
the localization given by the levels of the tree, a Markov
field simply becomes a non-homogeneous Markov chain and in this case the
structure of the subclass of Markov states is known \cite{AF03} .

In this paper, we investigate Markov property not only w.r.t. levels of the Cayley tree but also w.r.t. its finer localization structure of the considered tree through 
considering suitable  quasi-conditional expectation called {\it localized},
which keeps into account this finer localization and to prove the structure
theorem corresponding to this localization. An interesting consequence of this structure theorem is that the notion of competing interactions, previously introduced by hands \cite{MBS161,MBS162}, 
now emerges as a consequence of the intrinsic denition combined with the structure theorem.
Therefore, the present paper's main result differs  from the non-homogeneous one dimensional case studied in  \cite{AF03}.  

Notice in our previous work \cite{AMS} we investigated Markov property on the finer structure  of general graphs for backward quantum Markov fields through a specific tessellation on the set of vertices.

Let us outline the organization of the paper. After preliminary
information (see Section 2), in Section 3 we recall definition of quantum Markov chains and states
on Cayley trees. In Section 4, localized conditional expectations (connected to the tree) are
considered and described. In section 5, a Gibbs measure (see \cite{Geor,Roz} for Gibbs states on
Cayley trees) is constructed by means of QMS associated with localized conditional expectations. In
section 6, using the results of sections 4 and 5, we prove that any QMS (associated with localized
conditional expectations) can be realized as integral of product states w.t.r. the Gibbs measure. In
section 7, we prove a reconstruction result. Finally, in section 8, we will establish that any
locally faithful QMS associated with localized conditional expectations can be considered as Gibbs
state corresponding to Hamiltonians (on the Cayley tree) with commuting competing interactions which
implies that all QMC considered in \cite{MBS161,MBS162} are indeed QMS.

\section{Cayley tree}

Let $\Gamma^k_+ = (L,E)$ be a semi-infinite Cayley tree of order
$k\geq 1$ with the root $x^0$ (i.e. each vertex of $\Gamma^k_+$ has
exactly $k+1$ edges, except for the root $x^0$, which has $k$
edges). Here $L$ is the set of vertices and $E$ is the set of edges.
The vertices $x$ and $y$ are called {\it nearest neighbors} and they
are denoted by $l=<x,y>$ if there exists an edge connecting them. A
collection of the pairs $<x,x_1>,\dots,<x_{d-1},y>$ is called a {\it
path} from the point $x$ to the point $y$. The distance $d(x,y),
x,y\in V$, on the Cayley tree, is the length of the shortest path
from $x$ to $y$.

Let us set
\[
W_n = \{ x\in L \, : \, d(x,x_0) = n\} , \qquad \Lambda_n =
\bigcup_{k=0}^n W_k, \qquad  \L_{[n,m]}=\bigcup_{k=n}^mW_k, \ (n<m)
\]
\[
E_n = \big\{ <x,y> \in E \, : \, x,y \in \Lambda_n\big\}, \qquad
\Lambda_n^c = \bigcup_{k=n}^\infty W_k
\]

Recall a coordinate structure in $\G^k_+$:  every vertex $x$ (except
for $x^0$) of $\G^k_+$ has coordinates $(i_1,\dots,i_n)$, here
$i_m\in\{1,\dots,k\}$, $1\leq m\leq n$ and for the vertex $x^0$ we
put $(0)$.  Namely, the symbol $(0)$ constitutes level 0, and the
sites $(i_1,\dots,i_n)$ form level $n$ (i.e. $d(x^0,x)=n$) of the
lattice.


For $x\in \G^k_+$, $x=(i_1,\dots,i_n)$ denote
$$ S(x)=\{(x,i):\ 1\leq
i\leq k\}.
$$
Here $(x,i)$ means that $(i_1,\dots,i_n,i)$. This set is called a
set of {\it direct successors} of $x$.

Two vertices $x,y\in V$ is called {\it one level
next-nearest-neighbor  vertices} if there is a vertex $z\in V$ such
that  $x,y\in S(z)$, and they are denoted by $>x,y<$. In this case
the vertices $x,z,y$ is called {\it ternary} and denoted by
$<x,z,y>$.

Let us rewrite the elements of $W_n$ in the following lexicographic order (w.r.t. the
coordinate system)
\begin{eqnarray*}
\overrightarrow{W_n}:=\left(x^{(1)}_{W_n},x^{(2)}_{W_n},\cdots,x^{(|W_n|)}_{W_n}\right).
\end{eqnarray*}
Note that $|W_n|=k^n$. In this lexicographic order, the vertices
$x^{(1)}_{W_n},x^{(2)}_{W_n},\cdots,x^{(|W_n|)}_{W_n}$ of $W_n$ can
be represented in terms of the coordinate system as follows
\begin{eqnarray}\label{xw}
&&x^{(1)}_{W_n}=(1,1,\cdots,1,1), \quad x^{(2)}_{W_n}=(1,1,\cdots,1,2), \ \ \cdots \quad
x^{(k)}_{W_n}=(1,1,\cdots,1,k,),\\
&&x^{(k+1)}_{W_n}=(1,1,\cdots,2,1), \quad
x^{(2)}_{W_n}=(1,1,\cdots,2,2), \ \ \cdots \quad
x^{(2k)}_{W_n}=(1,1,\cdots,2,k),\nonumber
\end{eqnarray}
\[\vdots\]
\begin{eqnarray*}
&&x^{(|W_n|-k+1)}_{W_n}=(k,k,,\cdots,k,1), \
x^{(|W_n|-k+2)}_{W_n}=(k,k,\cdots,k,2),\ \ \cdots
x^{|W_n|}_{W_n}=(k,k,\cdots,k,k).
\end{eqnarray*}

Analogously, for a given vertex $x,$ we shall use the following
notation for the set of direct successors of $x$:
\begin{eqnarray*}
\overrightarrow{S(x)}:=\left((x,1),(x,2),\cdots (x,k)\right),\quad
\overleftarrow{S(x)}:=\left((x,k),(x,k-1),\cdots (x,1)\right).
\end{eqnarray*}

%
%
%

\section{Quantum Markov chains and states}

 The algebra of observables $\cb_x$ for any single site
$x\in L$ will be taken as the algebra $M_d$ of the complex $d\times
d$ matrices. The algebra of observables localized in the finite
volume $\L\subset L$ is then given by
$\cb_\L=\bigotimes\limits_{x\in\L}\cb_x$. As usual if
$\L^1\subset\L^2\subset L$, then $\cb_{\L^1}$ is identified as a
subalgebra of $\cb_{\L^2}$ by tensoring with unit matrices on the
sites $x\in\L^2\setminus\L^1$. Note that, in the sequel, by
$\cb_{\L,+}$ we denote the set of all positive elements of $\cb_\L$
(note that an element is positive if its spectrum is located in
$\br_+$). The full algebra $\cb_L$ of the tree is obtained in the
usual manner by an inductive limit
$$
\cb_L=\overline{\bigcup\limits_{\L_n}\cb_{\L_n}}.
$$

In what follows, by ${\cal S}({\cal B}_\L)$ we will denote the set
of all states defined on the algebra ${\cal B}_\L$.

Consider a triplet ${\cal C} \subset {\cal B} \subset {\cal A}$ of
unital $C^*$-algebras. Recall \cite{ACe} that a {\it
quasi-conditional expectation} with respect to the given triplet is
a completely positive (CP), identity-preserving linear map $\ce
\,:\, {\cal A} \to {\cal B}$ such that $ \ce(ca) = c \ce(a)$, for
all $a\in {\cal A},\, c \in {\cal C}$.

In what follows, by (Umegaki) \textit{conditional expectation}
$E:\mathcal{A}\to \mathcal{B}$ we mean a norm-one projection of the
$C^*$-algebra $\mathcal{A}$ onto a $C^*$-subalgebra (with the same
identity $\id$) $\mathcal{B}$. The map $E$ is automatically a
completely positive, identity-preserving $\mathcal{B}$-module map
\cite{Str}. If $\mathcal{A}$ is a matrix algebra, then the structure
of a conditional expectation is well-known \cite{AcLi}. Let is
recall some facts. Assume that $\mathcal{A}$ is a full matrix
algebra, and consider the (finite) set of $\{P_i\}$ of minimal
central projections of the range $\mathcal{B}$  of $E$, we have
$$
E(x)=\sum_iE(P_ixP_i)P_i.
$$
Then $E$ is uniquely determined by its values on the reduced
algebras
$$
\mathcal{A}_{P_i}:=P_i\mathcal{A}P_i=N_i\otimes\bar N_i,
$$
where $N_i\sim \mathcal{B}_{P_i}:=\mathcal{B}P_i$ and $\bar
N_i:=\mathcal{B}'P_i$ (here the commutant $\mathcal{B}'$ is
considered relative to $\mathcal{A}$). Moreover, there exist states
$\phi_i$ on $\bar N_i$ such that
\begin{equation}\label{Exp1}
E(P_i(a\otimes \bar a)P_i)=\phi_i(\bar a)P_i(a\otimes \id)P_i.
\end{equation}
For the general theory of operator algebras we refer to
\cite{BR,Str}.

\begin{definition}\cite{[AcFiMu07],AOM}\label{QMCdef}
Let $\varphi$ be a state on ${\cal B}_L$. Then $\ffi$ is called a
{\it (backward) quantum Markov chain}, associated to $\{\L_n\}$, if
there exists a quasi-conditional expectation $\ce_{\Lambda_n}$ with
respect to the triple ${\cal B}_{{\Lambda}_{n-1}}\subseteq {\cal
B}_{\Lambda_n}\subseteq{\cal B}_{\Lambda_{n+1}}$ for each
$n\in\bn$and an initial state $\rho_0\in S(B_{\L_0})$ such that
\begin{equation*}
\varphi = \lim_{n\to\infty} \rho_0\circ \ce_{\Lambda_0}\circ
\ce_{\Lambda_{1}} \circ \cdots \circ \ce_{\Lambda_n}
\end{equation*}
in the weak-* topology.
\end{definition}

 \begin{definition}\label{QMS1}\cite{[AcFiMu07]}
A quantum Markov chain $\ffi $ is said to be quantum Markov state
with respect to the sequence $\{\ce_{\Lambda_j}\}$ of
quasi-conditional expectations if one has
$$\ffi_{\lceil \Bl_{\Lambda_j}}\circ \ce_{\Lambda_j}=\ffi_{\lceil \Bl_{\Lambda_{j+1}}}, \ \
j\in\bn.$$
\end{definition}

In what follows, we always assume that the states are {\it locally
faithful} (i.e. states on $\Bl$ with faithful restrictions to local
subalgebras). By the standard way (see \cite{[AcFi01b],[AcFiMu07]})
one can show that the Markov property defined above can be stated by
a sequence of global quasi--conditional expectations, or equally
well by sequences of local or global conditional expectations. By
putting $e_{n}:=\ce_{\Lambda_n}\lceil_{\Bl_{\Lambda_{[n,n+1]}}}$, it
will be enough to consider the ergodic limits
$$
\ce^{(n)}:=\lim_m\frac{1}{m}\sum^{m-1}_{h=0}(e_{n})^{h}\,,
$$
which give rise to a sequence of two--step conditional expectations,
called {\it transition expectations} in the sequel.

For $j>0$, we define the conditional expectation $E_{j}$ from $\Bl_{\Lb_{j+1}}$ into $\Bl_{\Lb_{j}}$
by:
$$E_{j}\left(a_{W_0}\ot\cdots \ot a_{W_j}\ot a_{W_{j+1}}\right)=a_{W_0}\ot\cdots \ot a_{W_{j-1}}\ot
\El^{(j)}\left(a_{W_{j}}\ot a_{W_{j+1}}\right)$$

 One can prove
the following

\begin{proposition}
Let $\varphi$ be a state on the $\Bl$. The following assertions are
equivalent.
\begin{itemize}
\item[(i)] $\varphi$ is a quantum Markov state;
\item[(ii)] the properties listed in Definitions \ref{QMCdef} and \ref{QMS1} are
satisfied if one replaces the quasi--conditional expectations
$\ce_{\Lambda_n}$ with Umegaki conditional expectations $E_{n}$.
\end{itemize}
\end{proposition}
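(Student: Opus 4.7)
The plan is to prove the two implications separately. The direction (ii) $\Rightarrow$ (i) is essentially tautological: every Umegaki conditional expectation $E_n:\Bl_{\Lb_{n+1}}\to\Bl_{\Lb_n}$ is a norm-one projection, hence completely positive, unit preserving, and a bimodule map over its range $\Bl_{\Lb_n}$. Since $\Bl_{\Lb_{n-1}}\subseteq\Bl_{\Lb_n}$, the identity $E_n(ca)=cE_n(a)$ holds for every $c\in\Bl_{\Lb_{n-1}}$, so $E_n$ qualifies as a quasi-conditional expectation with respect to the triple $\Bl_{\Lb_{n-1}}\subseteq\Bl_{\Lb_n}\subseteq\Bl_{\Lb_{n+1}}$. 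The inductive-limit formula of Definition \ref{QMCdef} and the compatibility condition of Definition \ref{QMS1} then carry over verbatim when $\ce_{\Lb_n}$ is replaced by $E_n$.

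The substantive direction (i) $\Rightarrow$ (ii) I would handle via the ergodic averaging already set up in the paragraph immediately preceding the proposition. Suppose $\ffi$ is a QMS relative to $\{\ce_{\Lb_n}\}$. The QMS relation $\ffi_{\lceil\Bl_{\Lb_n}}\circ\ce_{\Lb_n}=\ffi_{\lceil\Bl_{\Lb_{n+1}}}$, together with the quasi-conditional module property, forces the two-site restriction $e_n:=\ce_{\Lb_n\lceil\Bl_{\Lb_{[n,n+1]}}}$ to be a unital CP map of $\Bl_{\Lb_{[n,n+1]}}$ that fixes $\Bl_{W_n}$ pointwise and preserves the locally faithful restriction of $\ffi$. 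The classical mean ergodic theorem for a unital CP map admitting an invariant faithful state yields that the Ces\`aro averages
\[
\ce^{(n)}=\lim_m\frac{1}{m}\sum_{h=0}^{m-1}(e_n)^h
\]
converge in the point-norm topology to a norm-one idempotent onto a $C^*$-subalgebra, namely $\mathrm{Fix}(e_n)\subseteq\Bl_{\Lb_{[n,n+1]}}$, which necessarily contains $\Bl_{W_n}$. Tensoring $\ce^{(n)}$ with the identity on $\Bl_{\Lb_{n-1}}$, exactly as in the explicit formula for $E_j$ given just before the proposition, produces an Umegaki conditional expectation $E_n:\Bl_{\Lb_{n+1}}\to\Bl_{\Lb_n}$ that still preserves $\ffi$ and reproduces the same weak-$*$ inductive limit, so Definitions \ref{QMCdef} and \ref{QMS1} hold with $E_n$ in place of $\ce_{\Lb_n}$.

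The main obstacle is the identification of the Ces\`aro limit $\ce^{(n)}$ as a genuine Umegaki conditional expectation, not merely as a CP contraction. One must verify that the fixed-point set of $e_n$ is a $C^*$-subalgebra and that the ergodic projection onto it is idempotent, norm one, and a bimodule map over its range. This uses in an essential way the standing hypothesis of local faithfulness: the existence of an invariant faithful state allows invocation of the standard ergodic theorem for CP maps (in the form of K\"ummerer and Frigerio--Verri), which guarantees that $\mathrm{Fix}(e_n)$ is a $C^*$-algebra and that $\ce^{(n)}$ is a conditional expectation onto it. Once this is settled, the module property over $\Bl_{\Lb_{n-1}}$ inherited from $\ce_{\Lb_n}$ ensures $\Bl_{\Lb_{n-1}}\subseteq\mathrm{Fix}(e_n)$, so the extension to $E_n$ lands correctly in $\Bl_{\Lb_n}$; the equality of the two inductive limits then follows from $\ffi\circ e_n=\ffi$ together with weak-$*$ continuity of composition with $\rho_0$.
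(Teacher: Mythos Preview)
The paper does not actually prove this proposition; it only records the ergodic-averaging construction $\ce^{(n)}=\lim_m m^{-1}\sum_{h<m}(e_n)^h$ in the paragraph immediately preceding the statement and defers to \cite{[AcFi01b],[AcFiMu07]} for the details. Your argument is precisely this route and matches the paper's intended approach.

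One small correction is in order: from the quasi-conditional property alone you only know that $\ce_{\Lambda_n}$ is a $\Bl_{\Lambda_{n-1}}$-module map (the right-module identity follows from the left one by taking adjoints, since $\ce_{\Lambda_n}$ is $*$-preserving), and hence that $e_n$ maps $\Bl_{\Lambda_{[n,n+1]}}$ into $\Bl_{\Lambda_{n-1}}'\cap\Bl_{\Lambda_n}=\Bl_{W_n}$, which is what makes the iterates $(e_n)^h$ well defined. It does \emph{not} follow that $e_n$ fixes $\Bl_{W_n}$ pointwise, as you claim; the ergodic limit $\ce^{(n)}$ is a conditional expectation onto the fixed-point algebra $\mathrm{Fix}(e_n)$, which is in general a proper subalgebra of $\Bl_{W_n}$. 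This does not damage the argument---the resulting $E_n=\id_{\Bl_{\Lambda_{n-1}}}\otimes\ce^{(n)}$ is still an Umegaki conditional expectation with range contained in $\Bl_{\Lambda_n}$ and containing $\Bl_{\Lambda_{n-1}}$, and it still preserves $\ffi$ and reproduces the inductive limit---but the sentence asserting that $e_n$ fixes $\Bl_{W_n}$ should be weakened accordingly.
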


The next result describes the quantum Markov states. Note that if
the tree $\Gamma^k_+$ is one dimensional (i.e: k=1), then the similar result
has been proven in \cite{AcLi,AF03}.

\begin{theorem}\label{caracofmarkovstate}
Let $\ffi\in \Sl(\Bl_L)$. Then $\ffi$ is a quantum Markov state
w.r.t the sequence of transition expectations
$\{\El^{(j)}\}_{j\geq0}$ if and only if
   \begin{equation}\label{Mp1}
\ffi(a)=\ffi\left(\El^{(0)}\left(a_{W_0}\ot\cdots \ot
\El^{(n-1)}\big(a_{W_{n-1}}\ot\El^{(n)}\left(a_{W_n}\ot a_{W_{n+1}}\right)\cdots\right)\right)
   \end{equation}
for every $n\in \bn$, and $a=a_{W_0}\ot \cdots a_{W_n}\ot a_{W_{n+1}}$ any linear generator of
$\Bl_{\Lb_{n+1}}$,
  with $a_{W_j}\in \Bl_{W_j}$ for $j=1\cdots n+1.$
\end{theorem}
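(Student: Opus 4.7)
The plan is to prove both implications by iterating the Markov identity appearing in Definition \ref{QMS1}, after using the preceding Proposition to replace the quasi--conditional expectations by the Umegaki conditional expectations $E_j$, so that Markovianity reads
$$\ffi_{\lceil \Bl_{\Lb_{j+1}}}=\ffi_{\lceil \Bl_{\Lb_j}}\circ E_j, \qquad j\in\bn.$$
The key algebraic input is the explicit formula for $E_j$ on simple tensors stated just before the theorem, namely
$$E_j(a_{W_0}\ot\cdots\ot a_{W_j}\ot a_{W_{j+1}})=a_{W_0}\ot\cdots\ot a_{W_{j-1}}\ot \El^{(j)}(a_{W_j}\ot a_{W_{j+1}}),$$
together with the fact that each transition expectation $\El^{(j)}$ maps $\Bl_{W_j}\ot \Bl_{W_{j+1}}$ into $\Bl_{W_j}$, so that $E_j(a)$ remains a simple tensor lying in $\Bl_{\Lb_j}$ and further iteration is legitimate.

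For the direct implication, I would fix $n\in\bn$ and a simple tensor generator $a=a_{W_0}\ot\cdots\ot a_{W_{n+1}}\in \Bl_{\Lb_{n+1}}$, and run a downward induction on the level. Applying $\ffi=\ffi\circ E_n$ yields
$$\ffi(a)=\ffi\bigl(a_{W_0}\ot\cdots\ot a_{W_{n-1}}\ot \El^{(n)}(a_{W_n}\ot a_{W_{n+1}})\bigr),$$
whose argument is again a simple tensor in $\Bl_{\Lb_n}$; next applying $\ffi=\ffi\circ E_{n-1}$ pushes one more level down and introduces the $\El^{(n-1)}$-block. Iterating this $n+1$ times produces exactly the nested expression (\ref{Mp1}), and linearity extends the identity to the whole of $\Bl_{\Lb_{n+1}}$.

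For the converse, assume (\ref{Mp1}) holds for every $n\in\bn$ on simple tensor generators. Given $a=a_{W_0}\ot\cdots\ot a_{W_{n+1}}\in \Bl_{\Lb_{n+1}}$, the image $E_n(a)$ is a simple tensor of the form $b_{W_0}\ot\cdots\ot b_{W_n}\in\Bl_{\Lb_n}$ with $b_{W_j}=a_{W_j}$ for $j\leq n-1$ and $b_{W_n}=\El^{(n)}(a_{W_n}\ot a_{W_{n+1}})\in \Bl_{W_n}$. Applying the assumed identity (\ref{Mp1}) at level $n-1$ to $E_n(a)$ produces the same nested expression as the right-hand side of (\ref{Mp1}) at level $n$ applied to $a$, because the innermost slot $b_{W_n}$ feeds directly into the $\El^{(n-1)}$-block. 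Hence $\ffi(a)=\ffi(E_n(a))$ on all simple tensors, and by linearity on the whole of $\Bl_{\Lb_{n+1}}$; since this holds for every $n$, the state $\ffi$ satisfies the Markov property with respect to $\{\El^{(j)}\}_{j\geq 0}$.

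The main obstacle is purely a bookkeeping one: verifying carefully that the maps $E_j$ introduced above are indeed the Umegaki conditional expectations $\Bl_{\Lb_{j+1}}\to\Bl_{\Lb_j}$ provided by the preceding Proposition (in particular that $E_j$ is a $\Bl_{\Lb_j}$-module map with the stated action on elementary tensors), and keeping track of the nesting pattern when iterating across $n+1$ levels. Once the module--map structure of $E_j$ is in place, both directions reduce to repeated application of a single algebraic rewriting step on simple tensors, with no further structural difficulty.
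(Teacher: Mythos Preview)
Your proposal is correct and follows essentially the same route as the paper: both directions are obtained by iterating the single-step Markov identity $\ffi=\ffi\circ E_j$ on simple tensor generators, with the converse established by observing that applying \eqref{Mp1} at level $n-1$ to $E_n(a)$ reproduces the right-hand side of \eqref{Mp1} at level $n$ applied to $a$, whence $\ffi(a)=\ffi(E_n(a))$. The only cosmetic difference is that you invoke the preceding Proposition explicitly to pass to the Umegaki conditional expectations $E_j$, whereas the paper's proof works directly with the defining Markov relation.
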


\begin{proof}
Suppose that $\ffi$ is a quantum Markov state w.r.t the sequence
$\{\El^{(j)}\}_{j\geq 0}$.  Then for any $a=a_{W_0}\ot a_{W_1}\ot
\cdots a_{W_n}\ot a_{W_{n+1}}\in \Bl_{\Lambda_{n+1}}$, by means of
the Markov property one gets
$$\ffi(a)=\ffi\left(a_{W_0}\cdots \ot
a_{W_{n-1}}\ot\El^{(n)}\left(a_{W_n}\ot a_{W_{n+1}}\right)\right)$$
Then by repeating the  application of the Markov property $n$ more
times we obtain  \eqref{Mp1}.

Conversely, assume that $\ffi$ satisfies the chain of conditions \eqref{Mp1} 
Then for a fixed \begin{equation}\label{AA}
a=a_{W_0}\ot  \cdots
a_{W_n}\ot a_{W_{n+1}}
\end{equation}
from $\Bl_{\Lb_{n+1}}$ by \eqref{Mp1} one finds
$$ \ffi(a)=\ffi\left(\El^{(0)}\left(a_{W_0}\ot\cdots \ot
\El^{(n-1)}\big(a_{W_{n-1}}\ot\El^{(n)}\left(a_{W_n}\ot a_{W_{n+1}}\right)\cdots\right)\right)$$
And again by application of \eqref{Mp1} we get 
$$\ffi\left(\El^{(0)}\left(a_{W_0}\ot\cdots \ot \El^{(n-1)}\big(
a_{W_{n-1}}\ot\El^{(n)}\left(a_{W_n}\ot a_{W_{n+1}}\right)\cdots\right)\right)
   =\ffi\left(a_{W_0}\ot\cdots\ot a_{W_{n-1}}\ot \widehat{a}_{W_n}\right)$$
where $\widehat{a}_{W_n}=\El^{(n)}\left(a_{W_n}\ot
a_{W_{n+1}}\right)$. Then keeping mind the equality
$$a_{W_0}\ot\cdots\ot a_{W_{n-1}}\ot \widehat{a}_{W_n}=E_n(a)$$
we obtain the Markov property for $\ffi$
$$\ffi_{\lceil\Bl_{\Lb_{n+1}}}\left(a\right)=\ffi_{\lceil
\Bl_{\Lb_n}}\left(E_n(a)\right),$$ since the elements of the form
\eqref{AA} generate $\Bl_{\Lb_{n+1}}$.
\end{proof}

\section{Localized conditional expectations}

In this section, we consider and describe localized conditional expectations. 
Namely, let $\El^{(j)}$ be a transition expectation from
$\Bl_{\Lambda_{[j,j+1]}}$ to $\Bl_{W_j}$. The transition expectation
$\El^{(j)}$ is called \textit{localized} if one has
\begin{equation}\label{decoE}
\El^{(j)}=\Ot_{x\in W_j}\El_{x}^{(j)},
\end{equation}
where $ \El_{x}^{(j)} :\Bl_{x}\ot \Bl_{(x,1)}\ot\cdots\ot
\Bl_{(x,k)} \rightarrow  \Bl_x$ is a conditional expectation.

\begin{remark} We notice that if one considers conditional expectations without localization
property, then the results of \cite{AF03} can be applied to the considered QMS and one can get the
disintegration of QMS, which would be not enough for its finer representation. Roughly speaking, in
that case, the Hamiltonians (see Section 8) would be defined on the levels $W_j$ and their structure
would not be described. Therefore, for our need, we have to impose the localization, which would
yield a desired representation of QMS.
\end{remark}

In what follows, we will use techniques of \cite{AcLi,FI} related to conditional expectations. 

For each $j\geq 0$ and $x\in W_j$ by $\mathcal{R}(\El^{(j)}_{x})$ we
denote the range of the transition expectation $\El_x^{(j)}$. Consider its centeral 
$Z^j_x$ (i.e. the center of $\mathcal{R}(\El^{j}_{x})$), together with its spectrum $\Omega^{j}_x$,
and the set of minimal central
projections $\{P^j_{\omega_x}\}_{\omega_x\in \Omega_{x}}$ (which is finite whenever
$\mathcal{R}(\El^{(j)}_{x})$ is finite dimensional). For the tail of simplicity, central projections $P^j_{\omega_x}$ will be denoted  $P_{\omega_x}$ henceforth.
We have $$\sum_{\Omega_x}P_{\omega_x}=1$$
 We define for $j\geq 0$ and $x\in W_j$.
$$R^{j}_{x}:=\bigoplus _{\Omega_x} P_{\omega_x} \Bl_x P_{\omega_x} $$
And  define the C*-subalgebra $\Rl$ of $\Bl_L$ by:
\begin{equation}\label{otRjx}
\Rl:= \Ot_{j\geq 0}\left(\Ot_{x\in W_j} R^{j}_{x}\right)
\end{equation}
Then we obtain, in a canonical way, a conditional
expectation
$$
E:\Bl_L\mapsto \Rl
$$
 defined to be the (infinite)
tensor product of the following conditional expectations:
\begin{equation}\label{E}
a\in \Bl_x\mapsto \sum_{\omega_x\in \Omega_{x}}P_{\omega_x}a P_{\omega_x}
\end{equation}
Note that for the algebra $R^{j}_x$ one has that
$P_{\omega_x}R^{j}_{x} P_{\omega_x}$ is a factor of
$B(H_{\omega_x})$ with $H_{\omega_x}=P_{\omega_x}H.$

Then $P_{\omega_x}R^{j}_{x} P_{\omega_x}=P_{\omega_x}B(H_{\omega_{x,0}})\Ot \id_{H_{\omega_{x,1}}}
P_{\omega_x}$,
 where $H_{\omega_x}=H_{\omega_{x,0}}\Ot H_{\omega_{x,1}}.$
Using this fact and the fact that the family of central projections
$\{P_{j_x}\}$ is orthogonal, we obtain
\begin{eqnarray}
R^{j}_{x}&=& \sum_{\omega_x\in \Omega_x}P_{\omega_x}R^{j}_{x}P_{\omega_x}\no
&=& \bigoplus_{\omega_x\in \Omega_x}P_{\omega_x}B(H_{\omega_{x,0}})\ot \id_{H_{\omega_{x,1}}}
P_{\omega_x}
\end{eqnarray}
and
$${R^{j}_{x}}'=\bigoplus_{\Omega_x}P_{\omega_x}\id_{H_{\omega_{x,0}}}\ot
B(H_{\omega_{x,1}})P_{\omega_x} $$

Then the transition expectation $\El^{(j)}_{x}$ can be written in a
canonical form: for $a\in \Bl_x^j\Ot
\Bl^{j+1}_{\overrightarrow{S(x)}}$ one has
\begin{equation}\label{decoEj}
\El_{x}^{(j)}(a)=\sum_{\Omega_x}P_{\omega_x}\Phi_{\omega_x}(P_{\omega_x} a P_{\omega_x})P_{\omega_x}\end{equation}
where $\Phi_{\omega_x} :B(H_{\omega_{x,0}})\Ot
B(H_{\omega_{x,1}})\Ot
\Bl_{\overrightarrow{S(x)}}^{(j+1)}\longrightarrow
B(H_{\omega_{x,0}})\Ot \id_{\omega_{x,1}}$ is the Umegaki
conditional expectation defined by a state $\phi_{\omega_x}$ on the
algebra $B(H_{\omega_{x,1}})\Ot \Bl_{\overrightarrow{S(x)}}^{(j+1)}$
in the following way:
$$\Phi_{\omega_x}(a_{\omega_{x,0}}\ot a_{\omega_{x,1}}\ot b)=\phi_{\omega_x}( a_{\omega_{x,1}}\ot
b)a_{\omega_{x,0}}\ot \id_{\omega_{x,1}}$$

Denote ${\Omega}^{(j)}:=\prod_{{x\in
\overrightarrow{W_j}}}\Omega_{x}$, and for
$\omega^j=\left(\omega_{x}\right)_{x\in \overrightarrow{W_j}}\in
{\Omega}^{(j)}$, we define a projection $P_{\omega^{(j)}}$ defined
by:
$$P_{\omega^{(j)}}:=\Ot_{x\in \overrightarrow{W_j}}P_{\omega_x}.$$

\begin{theorem}
Let $j\in \mathbf{N}$ and let for $\omega^{(j)}=(\omega_{x})_{x\in \overrightarrow{W_j}}$,
$$H_{\omega^{(j)},0}:=\Ot_{x\in \overrightarrow{W_j}}H_{\omega_{x,0}}, \quad
H_{\omega^{(j)},1}:=\Ot_{x\in
\overrightarrow{W_j}}H_{\omega_{x,1}}.$$ Then
\begin{enumerate}
  \item[(i)]  The range $R^{(j)}$ of the transition expectation $\El^{(j)}$ satisfies
  \begin{equation}
R^{(j)}\cong \bigoplus_{\omega^{(j)}\in
\Omega^{(j)}}\left(P_{\omega^{(j)}}B(H_{\omega^{(j)},0})\right)\Ot\left(\id_{H_{\omega^{(j)},1}}
P_{\omega^{(j)}}\right)
  \end{equation}
  \item[(ii)] Furthermore, for $a\in\Bl_{W_j}\Ot \Bl_{W_{j+1}}$ one has
\begin{equation}
\El^{(j)}(a)=\sum_{\omega^j\in \Omega^{(j)}}P_{\omega^{(j)}}\Phi_{\omega^{(j)}}(P_{\omega^{(j)}} a
P_{\omega^{(j)}})P_{\omega^{(j)}}
\end{equation}
where  $\Phi_{\omega^{(j)}}= \Ot_{x\in W_j}\Phi_{\omega_x}.$
\end{enumerate}
\end{theorem}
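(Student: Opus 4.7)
The plan is to combine the localization property \eqref{decoE} with the canonical decomposition \eqref{decoEj} of each local transition expectation $\El^{(j)}_x$, and then reorganize the resulting tensor product using distributivity of $\otimes$ over $\oplus$. First I would observe that, because each $y\in W_{j+1}$ is uniquely of the form $(x,i)$ for some $x\in W_j$ and $i\in\{1,\dots,k\}$, the algebra $\Bl_{W_j}\ot \Bl_{W_{j+1}}$ is canonically identified with $\Ot_{x\in \overrightarrow{W_j}}\bigl(\Bl_x\ot \Bl_{(x,1)}\ot \cdots \ot \Bl_{(x,k)}\bigr)$, so that \eqref{decoE} really gives $\El^{(j)}=\Ot_{x\in W_j}\El^{(j)}_x$ on the full algebra.

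For part (i), since each $\El^{(j)}_x$ is a conditional expectation onto $R^j_x$, the range $R^{(j)}$ of $\El^{(j)}=\Ot_x\El^{(j)}_x$ coincides with $\Ot_{x\in W_j}R^j_x$. Substituting $R^j_x=\bigoplus_{\omega_x\in \Omega_x}P_{\omega_x}B(H_{\omega_{x,0}})\ot \id_{H_{\omega_{x,1}}}P_{\omega_x}$ and distributing the finite tensor product over the direct sums gives
$$
R^{(j)}\cong \bigoplus_{\omega^{(j)}\in \Omega^{(j)}}\Ot_{x\in \overrightarrow{W_j}}\bigl(P_{\omega_x}B(H_{\omega_{x,0}})\ot \id_{H_{\omega_{x,1}}}P_{\omega_x}\bigr).
$$
Regrouping tensor factors according to the splitting $H_{\omega_x}=H_{\omega_{x,0}}\ot H_{\omega_{x,1}}$ and using the canonical isomorphism $\Ot_x B(H_{\omega_{x,0}})\cong B(H_{\omega^{(j)},0})$ together with $P_{\omega^{(j)}}=\Ot_x P_{\omega_x}$ identifies each summand with $P_{\omega^{(j)}}B(H_{\omega^{(j)},0})\ot \id_{H_{\omega^{(j)},1}}P_{\omega^{(j)}}$, which is the asserted form.

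For part (ii), I would first verify the formula on an elementary tensor $a=\Ot_{x\in \overrightarrow{W_j}}a_x$ with $a_x\in \Bl_x\ot \Bl^{(j+1)}_{\overrightarrow{S(x)}}$. Applying \eqref{decoE} and then \eqref{decoEj} at each vertex gives
$$
\El^{(j)}(a)=\Ot_{x\in \overrightarrow{W_j}}\Bigl(\sum_{\omega_x\in \Omega_x}P_{\omega_x}\Phi_{\omega_x}(P_{\omega_x}a_x P_{\omega_x})P_{\omega_x}\Bigr).
$$
Expanding the outer tensor by distributivity collapses into a single sum over $\omega^{(j)}\in \Omega^{(j)}$. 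Because the projections $P_{\omega_x}$ act on pairwise disjoint tensor factors, we have $\Ot_x P_{\omega_x}(\cdot)P_{\omega_x}=P_{\omega^{(j)}}(\cdot)P_{\omega^{(j)}}$ on elementary tensors, and by definition $\Phi_{\omega^{(j)}}=\Ot_x \Phi_{\omega_x}$, so the right-hand side becomes $\sum_{\omega^{(j)}}P_{\omega^{(j)}}\Phi_{\omega^{(j)}}(P_{\omega^{(j)}}a P_{\omega^{(j)}})P_{\omega^{(j)}}$. Finally, linearity (and continuity, in finite dimensions) of both sides extends the identity from elementary tensors to all of $\Bl_{W_j}\ot \Bl_{W_{j+1}}$.

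The main technical point to be careful with is the bookkeeping for the Hilbert space decompositions. One needs to verify that the finite tensor product of Umegaki conditional expectations $\Phi_{\omega_x}$ is again an Umegaki conditional expectation (i.e.\ that the defining product state $\Ot_x \phi_{\omega_x}$ is indeed a state on $B(H_{\omega^{(j)},1})\ot \Bl^{(j+1)}_{\overrightarrow{S(\overrightarrow{W_j})}}$ in the required factorized form), and that $\{P_{\omega^{(j)}}\}_{\omega^{(j)}\in\Omega^{(j)}}$ is the full family of \emph{minimal} central projections of $R^{(j)}$ rather than merely a partition of unity in the center. Both follow from the fact that $\{P_{\omega_x}\}_{\omega_x\in\Omega_x}$ is the complete family of minimal central projections of $R^j_x$ and that minimality of central projections, as well as the factor property of their corner algebras, is preserved under finite tensor products.
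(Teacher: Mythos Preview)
Your proposal is correct and follows essentially the same approach as the paper: for (i) you identify $R^{(j)}=\Ot_{x\in W_j}R^{j}_x$, substitute the direct-sum decomposition of each $R^{j}_x$, distribute $\otimes$ over $\oplus$, and regroup the factors; for (ii) you check the formula on elementary tensors by applying \eqref{decoE} and \eqref{decoEj} sitewise, expand the resulting tensor of sums, and extend by linearity---exactly as the paper does. Your additional remarks about minimality of the $P_{\omega^{(j)}}$ and the conditional-expectation property of $\Phi_{\omega^{(j)}}$ are useful checks the paper leaves implicit.
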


\begin{proof} (i) Due to $\El^{(j)}=\Ot_{x\in \overrightarrow{W_j}}\El_{x}^{(j)}$ the rang of
$\El^{(j)}$ is given by:
\begin{eqnarray}
R^{(j)}&=&\Ot_{x\in \overrightarrow{W_j}}R^{(j)}_{x}\no
&=&\Ot_{x\in \overrightarrow{W_j}}\left[\bigoplus_{\omega_x\in
\Omega_{x}}P_{\omega_x}B(H_{\omega_{x,0}})\ot \id_{H_{\omega_{x,1}}} P_{\omega_x}\right]\no
&=&\bigoplus_{\{\omega_{x_{W_j}(k)}\in \Omega_{x_{W_j}(k)};
k=1,\cdots,|W_j|\}}\left(P_{\omega_{x_{W_j}(1)}}B(H_{\omega_{x_{W_j}(1),0}})\ot
\id_{H_{\omega_{{x_{W_j}(1)},1}}} P_{\omega_{x_{W_j}(1)}}\right)\ot\no
&&\cdots \ot \left(P_{\omega_{x_{W_j}(|W_j|)}}B(H_{\omega_{{x_{W_j}(|W_j|)},0}})\ot
\id_{H_{\omega_{{x_{W_j}(|W_j|)},1}}} P_{\omega_{x_{W_j}(|W_j|)}}\right).\no
&\cong& \bigoplus_{\{\omega_{x_{W_j}(k)}\in \Omega_{x_{W_j}(k)};
k=1,\cdots,|W_j|\}}\left(\Ot_{k=1}^{|W_j|}
P_{\omega_{x_{W_j}(k)}}B(H_{\omega_{{x_{W_j}(k)},0}})\right)\ot\left(\Ot_{k=1}^{|W_j|}
\id_{H_{\omega_{{x_{W_j}(k)},1}}} P_{\omega_{x_{W_j}(k)}}\right)\no
&\cong& \bigoplus_{\omega^{(j)}\in
\Omega^{(j)}}\left(P_{\omega^{(j)}}B(H_{\omega^{(j)},0})\right)\Ot\left(\id_{H_{\omega^{(j)},1}}
P_{\omega^{(j)}}\right)
\end{eqnarray}

(ii) Let $a=a_0^{(j)}\ot a_1^{(j)}\ot b_{j+1}\in \Bl_{W_{n}}\Ot \Bl_{W_{n+1}}$ be a localized
element with
$$a_0^{(j)}=\bigotimes_{x\in W_j}a_{{x,0}}\in B(H_{0}^{(j)}),\quad a_1^{(j)}=\bigotimes_{x\in
W_j}a_{{x,1}}\in B(H_{1}^{(j)}),
\quad b_{j+1}=\bigotimes_{y\in W_{j+1}}b_{y}\in \Bl_{W_{j+1}}.$$
Then we have  
\begin{eqnarray*}
\El^{(j)}(a)&=&\Ot_{x\in W_j}\El_{x}^{(j)}\left(a_{{x,0}}\ot a_{{x,1}}\ot
b_{\overrightarrow{S(x)}}\right)\no
&=&\Ot_{x\in W_j}\bigl[\sum_{\omega_x}P_{\omega_x}\Phi_{\omega_x}(P_{\omega_x} a
P_{\omega_x})P_{\omega_x}\bigr]\no
&=& \sum_{\{\omega_x\in \Omega_{(x)}\mid x\in W_j\}}\Ot_{x\in W_j}
P_{\omega_x}\Phi_{\omega_x}(P_{\omega_x} a P_{\omega_x})P_{\omega_x}\no
&=& \sum_{\omega^{(j)}\in \Omega^{(j)}}P_{\omega^{(j)} }\Phi_{\omega^{(j)} }(P_{\omega^{(j)} } a
P_{\omega^{(j)} })P_{\omega^{(j)} }
\end{eqnarray*}
The last one can be extended to all elements of $\Bl_{W_j}\Ot
\Bl_{W_{j+1}}$.
\end{proof}

\begin{remark}
For ${\omega^{(j)} }=(\omega_x)_{x\in W_j}\in \Omega^{(j)}$ a
product state $\phi_{\omega^{(j)} }$ on $B(H_{{\omega^{(j)} },1})\ot
\Bl_{W_{j+1}}$ is defined on the localized elements by:
\begin{equation}
\phi_{\omega^{(j)} }\left(\Ot_{x\in W_j}(a_{\omega_x,1}\ot
a_{\overrightarrow{S(x)}})\right)=\prod_{x\in W_j}\phi_{\omega_x}(a_{\omega_x,1}\ot
a_{\overrightarrow{S(x)}})
\end{equation}
for every $a_{\omega_x,1}\in B(H_{\omega_x,1})$ and every $a_{\overrightarrow{S(x)}}\in
\Bl_{\overrightarrow{S(x)}}$.
\end{remark}

\section{A classical Gibbs measure}

In this section, we considering a quantum Markov state $\ffi$ on the quasi-local algebra $ \Bl_L$
w.r.t. the
sequence of localized conditional expectations $\{\ce^{(j)}\}$.
Assume, as before, that for every $x\in\G^k(x_0),  \Bl_x \cong \Al$ with $\Al$ is
a fixed finite dimensional C*-algebra. Then the range
$\mathcal{R}(\El^{(j)}_{x})$ of $\El_x^{(j)}$ is finite dimensional.
 The center $Z^{(j)}_{x}$ is finite dimensional and the index set
$\mathcal{I}_x$ is in one to one bijection with the spectrum
$\Omega^{(j)}_{x}:=spec(\mathcal{Z}_{j}^{x})$, then we denote the
set of minimal central projections by $\{P_{\omega_x}\mid
\omega_x\in \Omega_x\}$ and the states $\{\phi_{\omega_x}\mid j_x\in
\Omega_x\}$ by $\{\phi_{\omega_x}\mid \omega_x\in\Omega_x\}$ (see Section 4).

\begin{remark}
\begin{enumerate}
\item Let $\Lb\subset_{fin} L$, if $\psi_x$ is a given state on $\Bl_x$ for all $x\in \Lb$, we
denote the product state $\psi_{\Lb}$ of $\{\psi_i\mid i\in\Lb\}$ defined on $\Bl_\Lb$ by
$\psi_{\Lb}=\otimes_{x\in\Lb}\psi_i.$
\item Let $x\in L$, we denote simply $\psi_{\overrightarrow{S(x)}}:= \ot_{y\in
\overrightarrow{S(x})}\psi_y$ the product state of $\{\psi_y\in \Sl(\Bl_y)\mid y\in
\overrightarrow{S(x)}\}$ and $P_{\psi}=\ot_{y\in \overrightarrow{S(x)}}P_{\psi_y}$ its support
projection for $P_{\psi_y}$ the support projection of $\psi_y$.
\end{enumerate}
\end{remark}
\begin{lemma}
Let $\mathcal{Z}^j$ be the center of $\mathcal{R}(\El^{(j)})$ and $\Omega_j:=spec(\mathcal{Z}^j)$.
Then
\begin{enumerate}
\item $\mathcal{Z}^j=\Ot_{x\in W_j}\mathcal{Z}^j_x.$
\item $\Omega^{(j)}=\prod_{x\in W_j}\Omega^{(j)}_x \quad \left(i.e: \forall \og_j\in \Og_j, \quad
\exists \{\og_x\}_{x\in \overrightarrow{W_j}}) \mid \og^{(j)}=\ot_{x\in \overrightarrow{W_j}}\og_x
\right)$
\end{enumerate}
\end{lemma}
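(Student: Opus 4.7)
The proof is essentially a reduction to two standard facts about finite-dimensional C*-algebras: that the range of a tensor product of conditional expectations is the tensor product of their ranges, and that the center of a tensor product of finite-dimensional C*-algebras equals the tensor product of their centers. Here is how I would structure it.

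First, I would invoke the localization hypothesis \eqref{decoE}, namely $\El^{(j)}=\bigotimes_{x\in W_j}\El_{x}^{(j)}$. Since each $\El_x^{(j)}$ is a conditional expectation onto $\mathcal{R}(\El_x^{(j)})$ and tensoring preserves ranges in the finite-dimensional setting, one has
\[
\mathcal{R}(\El^{(j)}) \;=\; \bigotimes_{x\in W_j} \mathcal{R}(\El_x^{(j)}).
\]
I would make this explicit by checking that an element belongs to $\mathcal{R}(\El^{(j)})$ if and only if each tensor factor lies in $\mathcal{R}(\El_x^{(j)})$, using the idempotency of $\El^{(j)}$ together with the factorization.

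Next, for part (1), I would apply the standard identity $\mathcal{Z}(A\otimes B)=\mathcal{Z}(A)\otimes \mathcal{Z}(B)$ valid for finite-dimensional C*-algebras (an immediate consequence of the Artin--Wedderburn decomposition $A=\bigoplus_i M_{n_i}$, whose center is the diagonal $\bigoplus_i \mathbb{C}\id_{n_i}$, so that $(A\otimes B)\cong \bigoplus_{i,k}M_{n_i}\otimes M_{m_k}$ has center $\bigoplus_{i,k}\mathbb{C}\id\otimes\id = \mathcal{Z}(A)\otimes\mathcal{Z}(B)$). Iterating over $x\in W_j$ yields
\[
\mathcal{Z}^j \;=\; \mathcal{Z}\!\left(\bigotimes_{x\in W_j}\mathcal{R}(\El_x^{(j)})\right) \;=\; \bigotimes_{x\in W_j}\mathcal{Z}^j_x,
\]
which is assertion (1).

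For part (2), I would pass through Gelfand duality. Each $\mathcal{Z}^j_x$ is a finite-dimensional commutative C*-algebra isomorphic to $\mathbb{C}^{|\Omega_x^{(j)}|}$ via the decomposition into minimal central projections $\{P_{\omega_x}\}_{\omega_x\in\Omega_x^{(j)}}$. Consequently
\[
\mathcal{Z}^j \;\cong\; \bigotimes_{x\in W_j}\mathbb{C}^{|\Omega_x^{(j)}|} \;\cong\; \mathbb{C}^{\prod_{x\in W_j}|\Omega_x^{(j)}|},
\]
and the minimal central projections of $\mathcal{Z}^j$ are exactly the tensor products $P_{\omega^{(j)}}=\bigotimes_{x\in W_j}P_{\omega_x}$ with $\omega_x$ ranging over $\Omega_x^{(j)}$. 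Identifying the spectrum with the index set of minimal projections, I obtain $\Omega^{(j)}=\prod_{x\in W_j}\Omega_x^{(j)}$, which is assertion (2).

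None of the steps presents a serious obstacle: the main point is to verify cleanly that the range of a (finite) tensor product of conditional expectations is the tensor product of the ranges, so that the two well-known structural facts about centers and spectra of finite-dimensional C*-algebras can be applied. Finiteness of $W_j$ is essential here, so that no subtle issues about tensor product completions or von Neumann tensor products arise.
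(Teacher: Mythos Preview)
Your argument is correct. The paper in fact states this lemma without proof; the key ingredient you identify, namely $\mathcal{R}(\mathcal{E}^{(j)})=\bigotimes_{x\in W_j}\mathcal{R}(\mathcal{E}_x^{(j)})$, is exactly what the paper establishes in the proof of part~(i) of the preceding theorem (Section~4), and from there the two assertions follow by the standard Artin--Wedderburn and Gelfand-duality facts you invoke. So your approach coincides with the paper's implicit reasoning, simply made explicit.
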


%

For the sake of simplicity, let us denote the spectrum of
$Z:=Z(\Al)$ as follow $spec(Z):=\{\og_1, \og_2,\cdots \og_q\}$ and
the set of associate minimal central projection by
$\{P_{1},P_{2},\cdots,P_{q}\}$.  Without loss of generality we may
suppose that:
$$\forall x\in L, \forall \og_x\in \Og_x, \quad P_{\og_x}=i_{\Lambda_{x,x}}(P_{\og_r}) \quad
\textrm{for some}\quad r\in\{1,2,\cdots,q\} $$
where $i_{\Lambda_{x,x}}$ the canonical embedding of $\Al$ at the $x^{th}$ position on $\Bl_L$.

Let $\Phi:=\{1,2,\cdots,q\}$ we can also take $\Og=\Phi^{L}$
equipped with the $\sigma$-algebra $\mathcal{F}$ generated by its
cylinder subsets. For $A\subset L$ a spin configuration $\sigma_A$
on $A$ is defined as a function $x\in A\mapsto \sigma_A(x)\in \Phi$.
The set of all configurations coincides with $\Og_A=\Phi^{A}$.


\begin{remark}
\begin{enumerate}
\item For $A,B\subset L$, with $A\cap B =\varnothing$, and $\s\in \Og_A, \s'\in \Og_B$, we denote by
$\s\vee\s'$ the configuration defined on $A\cup B$ such that $\s\vee\s'_{\lceil A}=\s$ and
$\s\vee\s'_{\lceil B}=\s'$.
\item Let $j\geq 0$ the set $\Omega^{(j)}$ is equal $\Og_{W_j}$ through the following
identification:
    \begin{eqnarray*}
          \s_{\omega^{(j)}}: W_j &\rightarrow& \Phi \no
             x &\mapsto& \omega_x
       \end{eqnarray*}
If $\s=\s_{\omega^{(j)}}$ we denote the projection $P_{\omega^{(j)}}$ by $P_\s$ and the state
$\phi_{\omega^{(j)}}$ by $\phi_\s$, and for $i\in\{0,1\}$
the Hilbert spaces $H_{\omega_x,i}$ and $H_{{\omega^{(j)}},i}$ will be denoted respectively by
$H_{\s(x),i}$ and $H_{\overrightarrow{S(x)},i}$
\end{enumerate}
\end{remark}
 Now we define by induction the sequence $\{\mu_{\Lb_n}\}_{n\geq 0}$
as follows:
\begin{eqnarray}\label{mu}
\left\{
\begin{array}{ll}
\mu_{\Lb_0}(\s_0)=\ffi\left(i_{\Lb_0}(P_{\og_{\s_0(x_0)}})\right), \quad \forall \s_0\in
\Og_{\Lb_0},\no\\[2mm]
\mu_{\Lb_n}(\s)=\mu_{\Lb_0}(\s_{\lceil
\Lb_0})\prod\limits_{k=1}^{n-1}\prod\limits_{x\in
W_k}\pi_{\og_{\s(x)},\og_{\s(\overrightarrow{S(x)})}}, \quad \forall
\s\in\Og_{\Lb_n}, n\geq1,
\end{array}
\right.
\end{eqnarray}
where $\pi_{\omega_x,\omega_{\overrightarrow{S(x)}}}=
\phi_{\omega_x}\left(\id_{H_{\omega_x,1}} \ot
P_{\omega_{\overrightarrow{S(x)}}}\right)$

\begin{definition}
A sequence $\{P_{\Lb_n}\}_{n\geq 0}$ of probability measures on $\Og$ is said to be compatible if
for all $n\geq 0$ and $\s_n\in \Og_{\Lb_n}$
\begin{equation}
\sum_{\s'\in \Og_{W_{n+1}}}P_{\Lb_{n+1}}(\s_n\vee
\s')=P_{\Lb_n}(\s_n)
\end{equation}
\end{definition}
\begin{proposition}
The sequence $\{\mu_{\Lb_n}\}_{n\geq0}$ is compatible.
\end{proposition}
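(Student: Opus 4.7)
The plan is to expand $\mu_{\Lambda_{n+1}}(\sigma_n \vee \sigma')$ using the inductive definition \eqref{mu} and show that the dependence on $\sigma'$ factors through the transition weights at the single level $W_n \to W_{n+1}$. Concretely, because the product in the definition of $\mu_{\Lambda_{n+1}}$ contains, in addition to the factors already present in $\mu_{\Lambda_n}(\sigma_n)$, only the extra block
\[
\prod_{x \in W_n} \pi_{\omega_{\sigma_n(x)},\, \omega_{\sigma'(\overrightarrow{S(x)})}},
\]
one obtains the clean factorization
\[
\mu_{\Lambda_{n+1}}(\sigma_n \vee \sigma') \;=\; \mu_{\Lambda_n}(\sigma_n)\,\prod_{x \in W_n} \pi_{\omega_{\sigma_n(x)},\, \omega_{\sigma'(\overrightarrow{S(x)})}}.
\]
Verifying this decomposition carefully, together with the base case $n=0$, is the first step.

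Next, I would exploit the structural fact that $\{\overrightarrow{S(x)}\}_{x \in W_n}$ is a partition of $W_{n+1}$, so that the summation over $\sigma' \in \Omega_{W_{n+1}}$ is nothing but an independent summation of each block $\sigma'|_{\overrightarrow{S(x)}} \in \Omega_{\overrightarrow{S(x)}}$. This turns the product-over-$x$ and sum-over-$\sigma'$ into a product of sums:
\[
\sum_{\sigma' \in \Omega_{W_{n+1}}} \prod_{x \in W_n} \pi_{\omega_{\sigma_n(x)},\, \omega_{\sigma'(\overrightarrow{S(x)})}} \;=\; \prod_{x \in W_n}\; \sum_{\omega_{\overrightarrow{S(x)}} \in \Omega_{\overrightarrow{S(x)}}} \pi_{\omega_{\sigma_n(x)},\, \omega_{\overrightarrow{S(x)}}}.
\]

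The third step reduces each inner sum to $1$. Using the definition $\pi_{\omega_x,\omega_{\overrightarrow{S(x)}}} = \phi_{\omega_x}\bigl(\id_{H_{\omega_x,1}} \otimes P_{\omega_{\overrightarrow{S(x)}}}\bigr)$, linearity of the state $\phi_{\omega_x}$, and the tensor-factorization $P_{\omega_{\overrightarrow{S(x)}}} = \bigotimes_{y \in \overrightarrow{S(x)}} P_{\omega_y}$, I push the sum inside the state:
\[
\sum_{\omega_{\overrightarrow{S(x)}}} P_{\omega_{\overrightarrow{S(x)}}} \;=\; \bigotimes_{y \in \overrightarrow{S(x)}} \sum_{\omega_y \in \Omega_y} P_{\omega_y} \;=\; \bigotimes_{y \in \overrightarrow{S(x)}} \id \;=\; \id.
\]
Since $\phi_{\omega_x}$ is a state, $\phi_{\omega_x}(\id \otimes \id) = 1$, so each factor in the product over $x \in W_n$ equals $1$.

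Putting these three steps together gives $\sum_{\sigma'} \mu_{\Lambda_{n+1}}(\sigma_n \vee \sigma') = \mu_{\Lambda_n}(\sigma_n)$, which is the compatibility identity. The only real obstacle is bookkeeping: making sure that the factor for $k = n$ is the only one touched by $\sigma'$ (so that the $\mu_{\Lambda_n}(\sigma_n)$ factor pulls out cleanly), and keeping the tensor-product identifications straight when writing $P_{\omega_{\overrightarrow{S(x)}}}$ as a product over $y \in \overrightarrow{S(x)}$. No deeper property of the quantum Markov state is needed beyond the fact that $\phi_{\omega_x}$ are states and that the $P_{\omega_y}$ are a complete family of central projections.
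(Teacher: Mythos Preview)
Your proposal is correct and follows essentially the same route as the paper: factor $\mu_{\Lambda_{n+1}}(\sigma_n\vee\sigma')$ as $\mu_{\Lambda_n}(\sigma_n)$ times the level-$n$ transition weights, then use completeness $\sum P_{\omega_y}=\id$ together with $\phi_{\omega_x}(\id)=1$ to make the extra factor equal $1$. The only cosmetic difference is that the paper bundles the product over $x\in W_n$ into the single product state $\phi_{\sigma\lceil W_n}$ before summing, whereas you keep the factors over $x$ separate and swap sum with product explicitly.
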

\begin{proof}
Let $n\geq0$ and $\s\in \Og_{\Lb_n}$. For all $\s'\in \Og_{W_{n+1}}$
one has:
\begin{eqnarray*}
\mu_{\Lb_{n+1}}(\s\vee \s')&=&\mu_{\Lb_0}(\s\vee\s'_{\lceil
\Lb_0})\prod_{k=1}^{n}\prod_{x\in
W_k}\pi_{\og_{\s\vee\s'(x)},\og_{\s\vee\s'(\overrightarrow{S(x)})}}\no
&=&\left(\mu_{\Lb_0}(\s_{\lceil \Lb_0})\prod_{k=1}^{n-1}\prod_{x\in
W_k}\pi_{\og_{\s(x)},\og_{\s(\overrightarrow{S(x)})}}\right)\prod_{x\in
W_n}\pi_{\og_{\s(x)},\og_{\s'(\overrightarrow{S(x)})}}\no &=&
\mu_{\Lb_n}(\s)\prod_{x\in
W_n}\pi_{\og_{\s(x)},\og_{\s'(\overrightarrow{S(x)})}}\no
&=&\mu_{\Lb_n}(\s)\phi_{\s_{\lceil W_n}}(\id_{H_{\s_{\lceil
W_n},1}}\ot P_{\s'})\no
\end{eqnarray*}
According to $\sum_{\s'\in W_{n+1}}P_{\s'}=1$ with
$P_{\s'}=\ot_{x\in  W_{n+1}}P_{\s'(x)}$ we find
\begin{eqnarray*}
\sum_{\s'\in \Og_{W_{n+1}}}\mu_{\Lb_{n+1}}(\s\vee
\s')&=&\mu_{\Lb_n}(\s)\sum_{\s'\in \Og_{W_{n+1}}}\phi_{\s_{\lceil
W_n}}(\id_{H_{\s_{\lceil W_n},1}}\ot P_\s')\no &=&\mu_{\Lb_n}(\s)
.\phi_{\s_{\lceil W_n}}(\id_{H_{\s_{\lceil W_n},1}}\ot(\sum_{\s'\in
W_{n+1}} P_{\s'}))\no &=&\mu_{\Lb_n}(\s).\phi_{\s_{\lceil
W_j}}(\id_{H_{\s\lceil_{W_j},1}}\ot\id_{W_{n+1}})\no &=&
\mu_{\Lb_n}(\s).
\end{eqnarray*}
This completes the proof.
\end{proof}

Due to the previous proposition and to the fact that
${\mu_{\Lb_{n+1}}}_{\lceil \Og_{\Lb_n}}=\mu_{\Lb_n}$, the Kolmogorov
consistency theorem ensures the existence of  a probability measure
$\mu$ on $\Og$ such that $\mu_{\lceil\Og_{\Lb_n}}=\mu_{\Lb_n}$.

Let us consider the Hamiltonian on $\Og_{\Lb_n}$ defined by:
\begin{equation}
\Hl_{\Lb_n}(\s):= \sum_{k=1}^{n-1}\sum_{x\in
W_k}\rho\left(\s(x),\s(\overrightarrow{S(x)})\right)+\rho_0(\s_{\lceil \Lb_0})
\end{equation}
where $\rho_0(\s_0)=\ln (\mu(\s_0))$ for all $\s_0\in \Og_{\Lb_0}$ and
$\rho\left(\s(x),\s(\overrightarrow{S(x)})\right)=\ln\left(\pi_{\og_{\s(x)},\og_{\overrightarrow{S(x)}}}\right)$
for $x\in \G^{k}(x_0).$\\

Then the measure $\mu$ can be viewed as a Gibbs measure for the
Hamiltonian in the following sense:
\begin{equation}
\mu_{\Lb_n}(\s)=\frac{\textrm{e}^{\Hl_{\Lb_n}(\s)}}{Z_n}, \quad
\forall\s\in \Og_{\Lb_n}, \forall n\geq0,
\end{equation}
where $Z_n=\sum_{\s'\in \Og_{\Lb_n}}\textrm{e}^{\Hl_{\Lb_n}(\s')}$
\begin{remark}
\begin{enumerate}
\item The Hamiltonian $\Hl$ is well defined because the state $\ffi$ is locally faithful.
\item In our case $Z_n=\mu_{\Lb_n}(\Og_{\Lb_n})=1.$
\end{enumerate}
\end{remark}

\section{Disintegration of the state $\ffi$ }

We will use the same notations as the previous section. Let $\s\in
\Og$, for $x\in L$ we denote $\Bl_{\s(x)}:=P_{\s(x)}\Bl_xP_{\s(x)}$.
One can define a quasi-local algebra $\Bl_\s$ in the following way:
\begin{eqnarray}\label{Bsigma}
\Bl_\s&:=&\Ot_{j\geq 0}\left(\Ot_{x\in \overrightarrow{W_j}}\Bl_{P_{\s(x)}}\right)\no
&\equiv& B(H_{\s(x_0),0})\ot\left(\Ot_{j\geq 0}\left(B(H_{\s(x),1})\ot
B(H_{\s(\overrightarrow{S(x)}),0})\right)\right)
\end{eqnarray}
A completely positive identity preserving map $E_\s:\Bl_L\mapsto
\Bl_\s$ is defined as the (infinite) tensor product of the mappings
\begin{equation}\label{Esigma}
a\in \Bl_x\mapsto P_{\s(x)}a P_{\s(x)}
\end{equation}
The map $E_\s$ satisfies $E_\s=E_\s\circ E$, where $E$ is defined by
\eqref{E}.

Define a state $\psi_\s$ on $\Bl_\s$ as follows:
\begin{equation}\label{psi}
\psi_{\s_{\lceil \Lb_n}
}:=\eta^{(0)}_{\s(x_0)}\ot\left(\bigotimes_{j\geq
0}\left(\bigotimes_{x\in W_j}
\eta^{(j)}_{\s(x),\s(\overrightarrow{S(x)})}\right)\right)\ot
\bigg(\bigotimes_{y\in W_n}\widehat{\eta}^{(n)}_{\s(y)}\bigg)
\end{equation}
where the state $\eta^{(0)}_{\s(x_0)}$ is defined on
$B(H_{\s(x_0),0})$ by
\begin{equation}
\eta^{(0)}_{\s(x_0)}(a):=\frac{\ffi\left(i_{\Lb_0}\left(P_{\s(x_0)}(a\ot
\id_{H_{\s(x),1}})P_{\s(x_0)}\right)\right)}{\ffi(P_{\s(x_0)})}
\end{equation}
For $x\in W_j$,  the state $\eta^{(j)}_{\s(x)}$ is defined on
$B(H_{\s(x),1})\ot B(H_{\s(\overrightarrow{S(x)}),0})$ as follows:
\begin{equation}
\eta^{(j)}_{\s(x),\s(\overrightarrow{S(x)})}(a\ot b):=\frac{\phi_{\s(x)}\left(a\ot
i_{\Lb_{\overrightarrow{S(x)}}}\left(P_{\s(\overrightarrow{S(x)})}(b\ot
\id_{H_{\s(\overrightarrow{S(x)}),1}})P_{\s(\overrightarrow{S(x)})}\right)\right)}{\pi_{\s(x),\s(\overrightarrow{S(x)})}}
\end{equation}
And for $y\in W_n$, the state $\widehat{\eta}^{(n)}_{\s(y)}$ is
defined on $B(H_{\s(y),1})$ as follows:
\begin{equation}
\widehat{\eta}^{(n)}_{\s(y)}(a):=\sum_{\s'\in
\Og_{\overrightarrow{S(x)}}}\eta^{(n)}_{\s(x),\s'(\overrightarrow{S(x)})}
\end{equation}

Let $\left\{\El^{j}_{\s(x)}\right\}_{j\geq 0,  x\in W_j}$ be a
family of transition expectations:
$$\El^{j}_{\s(x)}: \Bl_{P_{\s(x)}}\ot \Bl_{P_{\s(\overrightarrow{S(x)})}}\mapsto \Bl_{P_{\s(x)}}$$
defined by
\begin{eqnarray}
&&\El^{j}_{\s(x)}\left((a_0\ot a_1)\ot(b_0\ot(\ot_{y\in
\overrightarrow{S(x)}} b_{y,1}))\right)\no
&&=\left(\eta^{(j)}_{\s(x),\s(\overrightarrow{S(x)})}(a_1\ot
b_0)(\prod_{y\in
\overrightarrow{S(x)}}\eta^{(j+1)}_{\s(y),\s(\overrightarrow{S(y)})}
(b_{y,1}\ot\id_{H_{\overrightarrow{S(y)},0}}))\right)a_0\ot\id
\end{eqnarray}
And let $\El^j_{\s_{\lceil W_j}}:\Bl^{j}_{\s(W_j)}\ot
\Bl^{j+1}_{\s(W_{n+1}}\to \Bl^{j}_{\s(W_j)}$ be a transition
expectation defined by:
\begin{equation}\label{Elsigma}
\El^j_{\s_{\lceil{W_j}}}:=\Ot_{x\in W_j}\El^j_{\s(x)}.
\end{equation}

One can prove the following fact.

\begin{proposition}
The state $\psi_\s$ satisfies the Markov property w.r.t the family
of transition expectations $\left\{\El^{j}_{\s}\right\}_{j\geq 0}$
given by $\eqref{Elsigma}$.
\end{proposition}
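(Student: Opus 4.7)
\noindent\emph{Proof plan.}
The plan is to invoke Theorem~\ref{caracofmarkovstate}, applied to the quasi-local algebra $\Bl_\s$ with the family $\{\El^{(j)}_\s\}_{j\geq 0}$ in place of $\{\El^{(j)}\}$. It will suffice to verify, for each $n \geq 0$ and each product generator $a = a_{W_0} \otimes \cdots \otimes a_{W_{n+1}} \in \Bl_{\s,\Lb_{n+1}}$, the single-step identity
$$\psi_{\s\lceil\Lb_{n+1}}(a) = \psi_{\s\lceil\Lb_n}\bigl(a_{W_0}\otimes \cdots \otimes a_{W_{n-1}} \otimes \El^{(n)}_\s(a_{W_n}\otimes a_{W_{n+1}})\bigr),$$
since iterating this identity reproduces \eqref{Mp1} for $\psi_\s$, and Theorem~\ref{caracofmarkovstate} then yields the Markov property.

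The verification is expected to proceed by exploiting the two tensor-product structures at hand: the factorization $\El^{(n)}_\s = \bigotimes_{x\in W_n}\El^{(n)}_{\s(x)}$ from \eqref{Elsigma}, together with the product definition \eqref{psi} of $\psi_\s$. Writing each local factor of the generator as $a_x = a_{x,0} \otimes a_{x,1}$ and regrouping in accordance with \eqref{Bsigma}, the LHS becomes, via \eqref{psi}, the product of $\eta^{(0)}_{\s(x_0)}(a_{x_0,0})$, the scalars $\eta^{(j)}_{\s(x),\s(\overrightarrow{S(x)})}(a_{x,1}\otimes a_{\overrightarrow{S(x)},0})$ for $x \in W_j$, $j=0,\ldots,n$, and the boundary contributions $\widehat{\eta}^{(n+1)}_{\s(y)}(a_{y,1})$ for $y \in W_{n+1}$. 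On the RHS, each $\El^{(n)}_{\s(x)}$ produces, by its defining formula, the scalar $\eta^{(n)}_{\s(x),\s(\overrightarrow{S(x)})}(a_{x,1}\otimes a_{\overrightarrow{S(x)},0})\prod_{y\in\overrightarrow{S(x)}}\eta^{(n+1)}_{\s(y),\s(\overrightarrow{S(y)})}(a_{y,1}\otimes\id)$ multiplied by the preserved operator $a_{x,0}\otimes\id$; evaluating $\psi_{\s\lceil\Lb_n}$ via \eqref{psi} then supplies the remaining factors for $j=0,\ldots,n-1$ together with the boundary contributions $\widehat{\eta}^{(n)}_{\s(x)}(a_{x,0}\otimes\id)$ for $x \in W_n$.

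The remaining task is a vertex-by-vertex matching of the two products. At each $x \in W_n$, the factor $\widehat{\eta}^{(n)}_{\s(x)}(a_{x,0}\otimes\id)$ simply absorbs the identity inserted by $\El^{(n)}_{\s(x)}$; at each $y \in W_{n+1}$, the factor $\widehat{\eta}^{(n+1)}_{\s(y)}(a_{y,1})$ appearing on the LHS must coincide with the factor $\eta^{(n+1)}_{\s(y),\s(\overrightarrow{S(y)})}(a_{y,1}\otimes\id)$ produced on the RHS --- an identification built directly into the definition of $\widehat{\eta}^{(n+1)}_{\s(y)}$. The main obstacle is not analytical but combinatorial: one must carefully align each $H_{\s(x),1}$-tensor factor with the $H_{\s(\overrightarrow{S(x)}),0}$-factors attached to the children, which is the very pairing encoded in \eqref{Bsigma} and in the defining formula of $\El^{(n)}_{\s(x)}$. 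Once this bookkeeping is in place, the equality holds on the generators, and extension to all of $\Bl_{\s,\Lb_{n+1}}$ by linearity and continuity concludes the argument.
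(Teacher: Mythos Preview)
The paper does not actually supply a proof of this proposition: it is introduced only by the sentence ``One can prove the following fact'' and then stated without argument. So there is no paper proof to compare against; your proposal is in fact filling a gap the authors left open.

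Your strategy is the natural one and is clearly what the authors have in mind: invoke Theorem~\ref{caracofmarkovstate} on the reduced algebra $\Bl_\s$, and verify the one-step identity vertex by vertex using the tensor factorizations \eqref{psi} and \eqref{Elsigma}. The bookkeeping you outline (pairing the $H_{\s(x),1}$-slot with the $H_{\s(\overrightarrow{S(x)}),0}$-slots of the children, in accordance with \eqref{Bsigma}) is exactly right, and the factor-by-factor matching for $j=0,\dots,n-1$ goes through as you describe.

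One point deserves more care than you give it. You assert that the boundary factor $\widehat{\eta}^{(n+1)}_{\s(y)}(a_{y,1})$ on the LHS coincides with $\eta^{(n+1)}_{\s(y),\s(\overrightarrow{S(y)})}(a_{y,1}\otimes\id)$ on the RHS, saying this is ``built directly into the definition of $\widehat{\eta}^{(n+1)}_{\s(y)}$.'' The paper's displayed definition of $\widehat{\eta}$ is visibly garbled (a sum over $\s'$ with missing arguments and mismatched indices), so this step cannot be read off literally. What makes your claim correct is that this \emph{same} identity is precisely what is needed for the family $\{\psi_{\s\lceil\Lb_n}\}_n$ to be projectively compatible in the first place---that is, for $\psi_\s$ to be a well-defined state on $\Bl_\s$. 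So the boundary matching is not an independent verification but rather a restatement of the consistency of \eqref{psi}; it would strengthen your write-up to say so explicitly rather than to appeal to a definition that, as printed, does not quite say what is needed.
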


\begin{theorem}\label{disintegration}
Let $\ffi$ be a Markov state w.r.t the family of localized
transitions expectations $\{\El^{j} \}_{j\geq 0 }$. Assume that
$\Og$ and $\mu$ are as in Section 5 and the quasi-local algebra
$\Bl_\s$ is given by $\eqref{Bsigma}$, the map $E_\s$ by tensor of
the maps $\eqref{Elsigma}$; the state $\psi_\s$ on $\Bl_\s$ is
defined by $\eqref{psi}$. Then the state $\ffi$ admits a
disintegration $$\ffi=\int_{\Og}\ffi_{\s}\mu(d\s),$$ where $\s\in\Og
\mapsto \ffi_\s\in \mathcal{S}(\Bl_L)$ is a
$\s(\Bl_L^*,\Bl_L)$-measurable map satisfying, for $\mu$-almost all
$\s\in \Og$,
$$\ffi_\s=\psi_\s\circ E_\s$$
\end{theorem}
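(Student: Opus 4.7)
The plan is to take $\ffi_\s := \psi_\s \circ E_\s$ as the definition and then verify three things: that $\ffi_\s$ is a state, that $\s \mapsto \ffi_\s(a)$ is measurable for every $a \in \Bl_L$, and that the integration formula $\int_\Og \ffi_\s(a)\, \mu(d\s) = \ffi(a)$ holds on a total set of local elements. Since both sides are norm-continuous in $a$ (the right via the uniform bound $|\ffi_\s(a)|\le \|a\|$ together with dominated convergence), this identity then extends to all of $\Bl_L$.

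The first two items are light. Each compression $a \mapsto P_{\s(x)} a P_{\s(x)}$ is completely positive and unital as a map into $\Bl_{\s(x)} = P_{\s(x)} \Bl_x P_{\s(x)}$; tensoring yields a unital CP map $E_\s : \Bl_L \to \Bl_\s$, while $\psi_\s$ is a state on $\Bl_\s$ by construction, so $\ffi_\s = \psi_\s \circ E_\s$ is a state. For measurability, it suffices to test on local $a \in \Bl_{\Lb_{n+1}}$: then $\ffi_\s(a)$ depends only on $\s_{\lceil \Lb_n}$, which takes finitely many values, so $\s \mapsto \ffi_\s(a)$ is a finite linear combination of cylinder indicator functions, hence $\mathcal{F}$-measurable.

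The content lies in the integration formula. Fix $a = a_{W_0} \ot \cdots \ot a_{W_{n+1}}$. By Theorem \ref{caracofmarkovstate},
\begin{equation*}
\ffi(a) = \ffi\bigl(\El^{(0)}\bigl(a_{W_0} \ot \El^{(1)}\bigl(a_{W_1} \ot \cdots \ot \El^{(n)}(a_{W_n} \ot a_{W_{n+1}})\cdots\bigr)\bigr)\bigr),
\end{equation*}
and the localization (\ref{decoE}) together with the structure theorem for $\El^{(j)}$ proved in Section 4 replaces each $\El^{(j)}$ by $\sum_{\omega^{(j)}} P_{\omega^{(j)}}\Phi_{\omega^{(j)}}(P_{\omega^{(j)}}\cdot P_{\omega^{(j)}})P_{\omega^{(j)}}$. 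Commuting these finite sums past the outer applications, the nested expression becomes a single sum over configurations $\s \in \Og_{\Lb_n}$. Each summand factorizes into a scalar times an operator part: the scalar is $\ffi\bigl(i_{\Lb_0}(P_{\s(x_0)})\bigr)$ times the product over $x \in \Lb_{n-1}$ of the factors $\pi_{\og_{\s(x)}, \og_{\s(\overrightarrow{S(x)})}}$, which by \eqref{mu} equals $\mu_{\Lb_n}(\s)$; the operator part, after dividing by these same scalars, is the product over $x$ of the actions of $\eta^{(0)}_{\s(x_0)}$ and $\eta^{(j)}_{\s(x),\s(\overrightarrow{S(x)})}$ on $E_\s(a)$, i.e.\ $\psi_\s(E_\s(a))$ by \eqref{psi}. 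Hence $\ffi(a) = \sum_{\s \in \Og_{\Lb_n}} \mu_{\Lb_n}(\s)\, \psi_\s(E_\s(a)) = \int_\Og \psi_\s(E_\s(a))\, \mu(d\s)$.

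The main obstacle is precisely the bookkeeping in this last step: across all levels $0 \le j \le n$ one must simultaneously track how each tensor-product central projection $P_{\omega^{(j)}} = \Ot_{x \in W_j} P_{\omega_x}$ splits the local algebra at $x$ into a ``$0$-part'' on which $\eta^{(j)}_{\s(x),\s(\overrightarrow{S(x)})}$ acts, and a ``$1$-part'' absorbed into the conditional-probability factor $\pi_{\og_{\s(x)},\og_{\s(\overrightarrow{S(x)})}}$. The normalizations designed into the definitions of $\eta^{(0)}_{\s(x_0)}$ and $\eta^{(j)}_{\s(x),\s(\overrightarrow{S(x)})}$ are tuned exactly so that these factors telescope into the Gibbs weight $\mu_{\Lb_n}(\s)$ while the normalized residues reconstitute $\psi_\s \circ E_\s$. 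Writing out this matching cleanly, and then justifying the extension from local generators to all of $\Bl_L$ by norm density and the uniform bound, is the heart of the argument.
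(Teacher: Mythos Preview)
Your approach is correct but differs from the paper's in an interesting way. You \emph{define} $\ffi_\s := \psi_\s \circ E_\s$ outright and then verify the integral formula $\ffi = \int_\Og \ffi_\s\, \mu(d\s)$ directly on localized elements, so that the ``$\mu$-a.e.'' conclusion is automatic. The paper instead first invokes abstract disintegration theory: it passes to the GNS representation of $\Rl$ for $\ffi_{\lceil \Rl}$, uses that the commutative subalgebra $\mathcal{C} = \Ot_{j,x} Z^j_x$ sits in the center of $\pi(\Rl)''$, and applies the direct-integral decomposition of $\pi$ over $\Og$ to obtain a measurable field $\ffi_\s = \langle \pi_\s(E(\cdot))\zeta_\s, \zeta_\s\rangle$. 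Only then does it compute $\ffi$ on elements of the form $AZ$ (with $Z$ a cylinder projection) to match $\int z(\s)\ffi_\s\,\mu(d\s)$ against $\int z(\s)\psi_\s(E_\s(\cdot))\,\mu(d\s)$, concluding $\ffi_\s = \psi_\s\circ E_\s$ a.e.\ by density of simple cylinder functions in $C(\Og)$ and uniqueness of the Radon--Nikodym derivative.

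Both routes pass through the same computational core---expanding the nested $\El^{(j)}$'s via their central decomposition and recognizing the resulting scalar factors as $\mu_{\Lb_n}(\s)$ and the operator factors as $\psi_\s\circ E_\s$. Your argument is more elementary (no direct integrals, no Radon--Nikodym step) and in fact yields the identity for \emph{every} $\s$, not merely $\mu$-a.e. The paper's route, on the other hand, situates the result within the general framework of disintegrating a state with respect to a central abelian subalgebra, which explains \emph{why} $(\Og,\mu)$ is the right measure space and shows that any such central disintegration must recover $\psi_\s\circ E_\s$; this structural viewpoint is what the extra machinery buys.
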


\begin{proof}
Let $\ffi$ be a Markov state w.r.t the sequence $\{\El_j\}_{j\geq
0}$. We define a commutative $C^{*}-$subalgebra $\mathcal{C}$
defined by:
$$\mathcal{C}:=\Ot_{j \geq 0}\left(\Ot_{x\in W_j}Z^{j}_x \right)$$
Following \cite{AF03}, let $(H,\pi)$ be the GNS representation of
$\Rl$ associate with $\ffi_{\lceil \Rl}$. Then
$\pi(\mathcal{C})''\subset \pi(\Rl)'\cap \pi(\Rl)''$ (see \cite[
section III.2, Theorem 7.2]{T}) hence, the representation $\pi$ can
be disintegrated as follows:
$$\pi= \int_{\Og}^{\oplus}\pi_{\s}\mu(d\s)$$
where $\s\mapsto \pi_{\s}$ is a weakly measurable field of
representation of $\Rl$ (see \cite[Theorem IV.8.25]{T},\cite{AF03}).
Furthermore one can find a measurable field $\s\mapsto \zeta_\s$ of
unit vectors such that, for each $a\in \Rl$, we get
$$\ffi(a)=\int_{\Og}< \pi_{\s}(a)\zeta_\s, \zeta_{\s}>\mu(d\s)$$
As $\ffi$ is a Markov state, it's invariant under $E$ and the previous expression can be extended on
$\Bl_L$ by
$$\ffi=\int_{\Og}\ffi_{\s}\mu(d\s)$$
for the $\s(\Bl_L^*,\Bl_L)$-measurable field $\ffi_\s$ defined as
$$\ffi_\s:= < \pi_\s(E(\cdot))\zeta_\s,\zeta_\s>.$$
Let us now prove the second part of the theorem about the expression
of $\ffi_\s$. For $\bar{\s}\in \Og$ take an element $A\in
\Bl_{\Lb_n}$ given by
$$A=\Ot_{j\geq 0}\left(\Ot_{x\in W_j}P_{\bar{\s}(x)}(a_{\bar{\s}(x),0}\ot
a_{\bar{\s}(x),1})P_{\bar{\s}(x)}\right)$$
and consider $$Z=\Ot_{j\geq 0}\left(\Ot_{x\in W_j}P_{\bar{\s}(x)}\right)$$
Then we obtain $$\ffi(i_{\Lb_n}(AZ))=\int_{\Og}z(\s)\ffi_\s(i_{\Lb_n}(AZ))\mu(d\s)$$
where $$z(\s)=\delta_{\s_{\lceil W_n},\bar{\s}_{\lceil W_n}}$$

Now for $j\in\{0,\cdots,n \}$, define
$$A_{j}=\Ot_{x\in
W_j}P_{\s(x)}(a_{\s(x),1}\ot a_{\s(x),1})P_{\s(x)}.$$ Using the
Markov property, $\ffi(i_{\Lb_n}(AZ))$ can be computed as follows:
\begin{eqnarray*}
\ffi(i_{\Lb_n}(AZ))=z(\s)\ffi(i_{\Lb_n}(A))&=&z(\s)\sum_{\s_{n+1}\in
\Og_{W_{n+1}}}\ffi(i_{\Lb_{n+1}}(A\ot P_{\s_{n+1}}))\no
&=&z(\s)\sum_{\s_{n+1}\in\Og_{n+1}}\ffi(\El^{(0)}(A_0\ot
\El^{(1)}(A_1\ot\cdots \\[2mm]
&&\ot \El^{(n-1)}(A_{n-1}\ot \El^{(n)}(A_{n}\ot P_{\s_{n+1}})))))
\end{eqnarray*}
According to the localized property of $\El^{(j)}$ one has
\begin{eqnarray*}
\El^{(n)}(A_n\ot P_{\s_{n+1}})&=&\Ot_{x\in
W_n}\El_{x}^{(n)}(P_{\s(x)}(a_{\s(x),0}\ot a_{\s(x),1})P_{\s(x)}\ot
P_{\s_{n+1}(\overrightarrow{S(x)})})\no &=&\Ot_{x\in
W_n}\phi_{\s(x)}\left(P_{\s(x)}a_{\s(x)}P_{\s(x)}\ot
P_{\s_{n+1}(\overrightarrow{S(x)})}\right)P_{\s(x)}(a_{\s(x),0}\ot
\id_{\s(x),1})\no &=&\prod_{x\in
W_n}\phi_{\s(x)}\left(P_{\s(x)}a_{\s(x)}P_{\s(x)}\ot
P_{\s_{n+1}(\overrightarrow{S(x)})}\right) \Ot_{x\in W_n}
P_{\s(x)}(a_{\s(x),0}\ot \id_{\s(x),1})\no &=&\prod_{x\in
W_n}\eta^{(n)}_{\s(x),\s_{n+1}(\overrightarrow{S(x)})}(a_{\s(x),1}\ot
\id_{\s_{n+1}(\overrightarrow{S(x)}),0})\pi_{\s(x),\s_{n+1}(\overrightarrow{S(x)})}\no
\end{eqnarray*}
Let $j\in\{0,\cdots, n-1\}$ and denote
$$A_{j+1,0}=\Ot_{x\in
W_{j+1}}P_{\s(x)}(a_{\s(x),0}\ot \id_{s(x),1})P_{\s(x)}$$ and
$a_{\overrightarrow{S(x)},0}=\Ot_{y\in
\overrightarrow{S(x)}}a_{\s(y),0}$. Then we get
\begin{eqnarray*}
\El^{(j)}\left(A_j\ot A_{j+1,0}\right)&=&\Ot_{x\in W_n}\El_{x}^{(n)}\left(P_{\s(x)}(a_{\s(x),0}\ot
a_{\s(x),1})P_{\s(x)}\ot P_{\s(\overrightarrow{S(x)})}(a_{\s(\overrightarrow{S(x)},0)}\ot
\id_{\overrightarrow{S(x)},1} )P_{\s(\overrightarrow{S(x)})}\right)\\
&=&\Ot_{x\in W_j}\phi_{\s(x)}\left(P_{\s(x)}a_{\s(x),1}P_{\s(x)}\ot
P_{\s(\overrightarrow{S(x)})}\right) P_{\s(x)}(a_{\s(x),0}\ot
\id_{\s(x),1})\no &=&\prod_{x\in
W_j}\phi_{\s(x)}\left(P_{\s(x)}a_{\s(x),1}P_{\s(x)}\ot
P_{\s(\overrightarrow{S(x)})}(a_{\s(\overrightarrow{S(x)}),0}\ot
a_{\s(\overrightarrow{S(x)}),1})P_{\s(\overrightarrow{S(x)})}\right)\no
&&\times \Ot_{x\in W_n} P_{\s(x)}(a_{\s(x),0}\ot \id_{\s(x),1})\no
&=&\prod_{x\in
W_j}\eta^{(j)}_{\s(x),\s(\overrightarrow{S(x)})}(a_{\s(x),1}\ot
\id_{\s(\overrightarrow{S(x)}),0})\pi_{\s(x),\s(\overrightarrow{S(x)})}
\Ot_{x\in W_n} P_{\s(x)}(a_{\s(x),0}\ot \id_{\s(x),1})\no
\end{eqnarray*}

Hence,
\begin{eqnarray*}
\ffi(i_{\Lb_n}(AZ))&=z(\s)&\ffi\left(\sum_{\s_{n+1}\in \Og_{n+1}}\prod_{j=1}^{n-1}\prod_{x\in
W_j}\eta^{(j)}_x(a_{\s(x),0}\ot a_{\s(\overrightarrow{S(x)}),0})\cdot
\pi_{\s(x),\s(\overrightarrow{S(x)})}\times a_{\s(x_0),0}\ot \id_{\s(x_0),1}\right)\no
&=&z(\s) \sum_{\s_{n+1}\in \Og_{W_{n+1}}}\psi_{\s\vee \s_{n+1}}(A\ot \id_{W_{n+1}})\no
&=&\int_{\Og}z(\s) \psi_{\s}\circ E_\s(A) \mu(d\s)
\end{eqnarray*}
Consequently, one has
$$\int_{\Og}z(\s)\ffi_{\s}\mu(d\s)\, =\, \int_{\Og}z(\s)\psi_{\s}(E_{\s}(a))\mu(d\s)$$
for each fixed localized operator $A\in \Bl_L$ and each function
$z\in C(\Og)$ depending only on finitely many variables. As such
functions are dense on $C(\Og)$ and due to the uniqueness of the
Radan-Nikodym derivative, for each localized $A\in\Bl_L$, there
exists a measurable set $\Og_A\subset \Og$ of full $\mu-$measure
such that for $\s\in \Og_0$ one has:
\begin{equation}\label{fffi}
\ffi_\s(A)=\psi_{\s}(E_\s(A))
\end{equation}
By considering linear combination with rational coefficients,  one
can find  dense subset $\Bl_{L,0}$  of localized operators such that
\eqref{fffi} still satisfied for each element of $\Bl_{L,0}$.

 Let
$a\in \Bl_L$ and $(a_n)_n$ be a sequence of $\Bl_{L,0}$ that
converges to $a$. Then for $\s\in \Og$ one has:
$$\ffi_{\s}(a)\;=\; \lim_{n}\; \ffi_{\s}(a_n)\; = \; \lim_{n}\;\psi_{\s}(E_{\s}(a_n))\; =
\;\psi_{\s}(E_{\s}(a))$$
 Thus the equation $\eqref{fffi}$ holds for all $\a\in \Bl_L$.
\end{proof}
 
\begin{corollary}
Let $$\ffi(a)=\int_{\Og}\ffi_{\s}\mu(d\s)$$ be the disintegration of
the Markov state $\ffi$ as in theorem $\eqref{disintegration}$. Then
$\phi_\s$ is a factor state for $\mu-$almost all $\s\in \Og$.
\end{corollary}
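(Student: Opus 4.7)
The plan is to use the factorization $\ffi_\s=\psi_\s\circ E_\s$ of Theorem~\ref{disintegration} in order to reduce the factor property of $\ffi_\s$ on $\Bl_L$ to that of the product state $\psi_\s$ on $\Bl_\s$. The first step is immediate: by \eqref{Bsigma} and \eqref{psi}, the algebra $\Bl_\s$ is an infinite tensor product of full matrix algebras $B(H_{\s(x_0),0})$ and $B(H_{\s(x),1})\otimes B(H_{\s(\overrightarrow{S(x)}),0})$, and $\psi_\s$ is an infinite product state over this factorization. By the classical Powers theorem on product states on infinite tensor products of type~I factors, $\pi_{\psi_\s}(\Bl_\s)''$ is a factor. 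The construction is well defined for every $\s\in\Og$ for which the conditional weights $\pi_{\s(x),\s(\overrightarrow{S(x)})}$ appearing in the denominators of \eqref{psi} are strictly positive, and by the construction of $\mu$ in Section~5 this set has full $\mu$-measure.

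In the second step, I would identify the ``$\s$-sector'' of the GNS representation of $\ffi_\s$. Let $(\mathcal{H}_{\ffi_\s},\pi_{\ffi_\s},\xi_{\ffi_\s})$ be the GNS triple of $\ffi_\s$, and for each $n\geq 0$ put $P_{\s,n}:=\Ot_{x\in\Lb_n}P_{\s(x)}\in\Bl_{\Lb_n}$. Since $E_\s(P_{\s,n})$ coincides with the unit of $\Bl_\s$, we have $\ffi_\s(P_{\s,n})=1$ and hence $\pi_{\ffi_\s}(P_{\s,n})\xi_{\ffi_\s}=\xi_{\ffi_\s}$. The monotone decreasing sequence $\{\pi_{\ffi_\s}(P_{\s,n})\}$ converges strongly to a projection $p_\s\in\pi_{\ffi_\s}(\Bl_L)''$ satisfying $p_\s\xi_{\ffi_\s}=\xi_{\ffi_\s}$. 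For any local $a\in\Bl_{\Lb_N}$, the identity $E_\s(a)=P_{\s,N}aP_{\s,N}$, together with the fact that $E_\s(a)\in\Bl_\s$ commutes with every $P_{\s,m}$, yields $p_\s\pi_{\ffi_\s}(a)p_\s=\pi_{\ffi_\s}(E_\s(a))p_\s$. Consequently the restriction of $\pi_{\ffi_\s}(\Bl_\s)$ to the invariant cyclic subspace $p_\s\mathcal{H}_{\ffi_\s}$ realizes the GNS representation of $\ffi_\s|_{\Bl_\s}=\psi_\s$, so the compressed algebra $p_\s\pi_{\ffi_\s}(\Bl_L)''p_\s|_{p_\s\mathcal{H}_{\ffi_\s}}$ is a factor by the first step.

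For the final step, let $z\in\mathcal{Z}(\pi_{\ffi_\s}(\Bl_L)'')$ be arbitrary. Since $p_\s\in\pi_{\ffi_\s}(\Bl_L)''$, we have $[z,p_\s]=0$, so $p_\s zp_\s$ lies in the center of the compressed factor and must equal $\lambda p_\s$ for some $\lambda\in\bC$. Therefore $z\xi_{\ffi_\s}=p_\s zp_\s\xi_{\ffi_\s}=\lambda\xi_{\ffi_\s}$, and the centrality of $z$ together with the cyclicity of $\xi_{\ffi_\s}$ for $\pi_{\ffi_\s}(\Bl_L)$ force $z=\lambda\cdot\1$. This shows that $\ffi_\s$ is a factor state for $\mu$-almost every $\s\in\Og$. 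I expect the main obstacle to be the second step: the strong-limit construction of $p_\s$ and the precise identification of the compressed representation with the GNS of $\psi_\s$ need to be handled carefully, leaning on the local faithfulness of $\ffi$ and on the $\mu$-a.s.\ positivity of the conditional weights used to define $\psi_\s$.
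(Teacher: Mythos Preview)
Your argument is essentially correct and is the standard route for this type of result; the paper itself gives no independent proof, merely referring to \cite[Corollary~3.3]{AF03}, whose one-dimensional argument is exactly the three-step scheme you outline (product state $\Rightarrow$ factor, compression by the $\s$-sector projection, and a cyclic-vector argument for the center).

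One point to tighten: you freely write ``$\pi_{\ffi_\s}(\Bl_\s)$'' and ``$\ffi_\s|_{\Bl_\s}=\psi_\s$'', but $\Bl_\s$ is \emph{not} a unital $C^*$-subalgebra of $\Bl_L$ (the local embedding sends the unit of $\Bl_{\s,\Lambda_N}$ to $P_{\s,N}$, not to $\id$, and the infinite product of the $P_{\s(x)}$ does not exist in $\Bl_L$). What actually holds is that the local corners $P_{\s,N}\Bl_{\Lambda_N}P_{\s,N}$ embed non-unitally, and your computation shows that $U:\pi_{\psi_\s}(b)\xi_{\psi_\s}\mapsto \pi_{\ffi_\s}(\tilde b)\xi_{\ffi_\s}$ extends to a unitary onto $p_\s\mathcal{H}_{\ffi_\s}$ intertwining $\pi_{\psi_\s}(\Bl_\s)''$ with the compression $p_\s\pi_{\ffi_\s}(\Bl_L)''p_\s$. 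Phrased this way (i.e.\ via the compressed algebra rather than a literal restriction to $\Bl_\s$), your second step goes through cleanly, and the rest of your proof is fine.
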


The proof is similar to \cite[Corollary 3.3]{AF03}.

\section{A reconstruction theorem}

In this section we study the converse direction of the disintegration result in the previous
section.

Let us consider for $j\geq 0$ and $x\in W_j$,  a commutative
subalgebra $\mathcal{Z}^j_x$ of $\Bl_x$ with spectrum its $\Og_{x}^j$ together
with its family of projections $\{P_{\s(x)}^j\}_{\s(x)\in
\Og^{j}_{x}}$. For $\s\in \Og$ we assume that the following
distributions are given:
\begin{itemize}
  \item $\pi_{\s(x_0)}^{0}>0$;  \quad  ({\it initial distribution})
\item For $j \geq 0$ and $ x\in W_j \; \quad \pi^{j}_{\s(x),\s(\overrightarrow{S(x)})}>0$; \quad
({\it transition probabilities})
\end{itemize}
such that for $n\geq 0$:  $$\sum_{\s'\in \Og_{W_{n+1}}}\prod_{x\in
W_j}\pi^{j}_{\s(x),\s'(\overrightarrow{S(x)})}=1$$ Then a Markov
measure $\mu$  on $\Og:=\prod_{j\geq 0}\prod_{x\in W_j}\Og_x^{j}$ is
defined as follows: For $n\geq 1$ and $\s\in \Og_{\Lb_n}$:
$$\mu(\s)=\pi_{\s(x_0)}^{0}\prod_{j=0}^{n-1}\prod_{x\in
W_j}\pi^{j}_{\s(x),\s(\overrightarrow{S(x)})}$$
Now for $j\geq 0$ and $x\in \Bl^{j}_x$ and $P_{\s(x)}$ a given central projection we set
$$\Bl^{j}_{P_{\s(x)}}:=P_{\s(x)}\Bl_{x}^{j}P_{\s(x)}=N^{j}_{\s(x)}\ot \overline{N}^{j}_{\s(x)},$$
where $N^{j}_{\s(x)}$ and
$\overline{N}^{j}_{\s(x)}$ are finite dimensional factors.

For $\s\in \Og$, let $\eta_{\s(x_0)}^{0}$ be a state on
$N_{\s(x_0)}$ and $\eta_{\s(x),\s(\overrightarrow{S(x)})}^{j}$ be a
state on $\overline{N}^{j}_{\s(x)}\ot N^{j+1}_{\s(S(x))}$ with
$N^{j+1}_{\s(S(x))}=\Ot_{y\in W_{n+1}}N_{\s(y)}$. For each $\s\in
\Og$ we define the state $\psi_{\s}$ by $\eqref{psi}$ on the
quasi-local algebra $\Bl_{\s}$ defined by $\eqref{Bsigma}$. Let
$E_{\s}:\Bl_L \to \Bl_\s$ be given in $\eqref{Esigma}$, together
with the $\s(\Bl_L^*, \Bl_L)-$measurable map
$$\s\in \Og \mapsto \ffi_{\s}:=\psi_{\s}\circ E_{\s}$$
\begin{theorem}
For the same notations as above, the state $\ffi$ on $\Bl_L$ given by
$$\ffi:=\int_{\Og}\ffi_{\s}\mu(d\s)$$
is a Markov state w.r.t the sequence of transition expectations
$\{\El^{j}\}_{j\geq 0}$, determined by the states $\phi_{\s(x)}^{j}$
satisfying, for each $j\geq0, x\in W_{j}$ and $\s\in \Og$ the
following equality
\begin{eqnarray}
&&\phi^{j}_{\s(x)}(\overline{a}\ot P^{j+1}_{\s(\overrightarrow{S(x)})}(b\ot
\overline{b})P^{j+1}_{\s(\overrightarrow{S(x)})})\no
&&=\sum_{\s_{j+2}\in
\Og_{W_{j+2}}}\pi^{j}_{\s(x),\s(\overrightarrow{S(x)})}\eta^{j}_{\s(x),\s(\overrightarrow{S(x)})}
(\overline{a}\ot b)\prod_{y\in \overrightarrow{S(x)}}\pi^{j+1}_{\s(y),\s(\overrightarrow{S(y)})}
\eta^{j+1}_{\s(y),\s(\overrightarrow{S(y)})}(\overline{b_{y}}\ot \id)
\end{eqnarray}
with $a\in N^{j}_{\s(x)},\; \overline{a}\in
\overline{N}^{j}_{\s(x)}, \; b=\ot_{y\in S(x)}b_{\s(y)}\in
N^{j+1}_{\overrightarrow{S(x)}}$ and $\overline{b}=\ot_{y\in
\overrightarrow{S(x)}}\overline{b}_{\s(y)}\in
\overline{N}_{\overrightarrow{S(x)}}$.
\end{theorem}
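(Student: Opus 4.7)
The strategy is to invert the reasoning in the proof of Theorem \ref{disintegration}. My first step is to show that the displayed identity genuinely defines each state $\phi^{j}_{\s(x)}$ on the algebra $\overline{N}^{j}_{\s(x)}\otimes \Bl_{\overrightarrow{S(x)}}^{(j+1)}$ (equivalently, on $\overline{N}^{j}_{\s(x)}\otimes B(H_{\s(\overrightarrow{S(x)}),0})\otimes B(H_{\s(\overrightarrow{S(x)}),1})$ decomposed via the central projections $P^{j+1}_{\s(\overrightarrow{S(x)})}$). Because the central projections $\{P_{\s(\overrightarrow{S(x)})}\}$ on $\Bl_{\overrightarrow{S(x)}}^{(j+1)}$ are mutually orthogonal and sum to $\id$, it suffices to prescribe $\phi^{j}_{\s(x)}$ on each reduced sector $P_{\s(\overrightarrow{S(x)})}\Bl_{\overrightarrow{S(x)}}^{(j+1)}P_{\s(\overrightarrow{S(x)})}$; the given equation does exactly this, with the constants $\pi^{j}$ and the product of $\pi^{j+1}\eta^{j+1}(\overline{b_y}\otimes\id)$ together being a positive functional, which after summing over $\s_{j+2}$ and using $\sum_{\s'}\prod\pi^{j+1}_{\s(y),\s'(\overrightarrow{S(y)})}=1$ and the normalizations of $\eta^{j+1}_{\s(y),\cdot}$ yields a \emph{state}. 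Positivity on each sector follows from the positivity of $\eta^{j}$ tensored with a probability measure.

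Next, with $\phi^{j}_{\s(x)}$ in hand I build the local transition expectations via \eqref{decoEj}, i.e.
\[
\El_{x}^{(j)}(a)=\sum_{\omega_x}P_{\omega_x}\Phi_{\omega_x}(P_{\omega_x}a P_{\omega_x})P_{\omega_x},
\]
where $\Phi_{\omega_x}$ is the Umegaki conditional expectation induced by $\phi_{\omega_x}:=\phi^{j}_{\s(x)}$ restricted to the appropriate sector, and set $\El^{(j)}:=\bigotimes_{x\in W_j}\El_{x}^{(j)}$ as in \eqref{decoE}. This yields a \emph{localized} transition expectation by construction, so the theorem of Section 4 identifying its range and its canonical form applies automatically.

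The core of the proof is then to verify the Markov characterization of Theorem \ref{caracofmarkovstate} for the state $\ffi=\int_{\Og}\psi_{\s}\circ E_{\s}\,\mu(d\s)$. I would take a localized generator $A=a_{W_0}\ot\cdots\ot a_{W_{n+1}}$ of $\Bl_{\Lb_{n+1}}$, expand each $a_{W_j}$ with respect to the central projections $\{P_{\s_j}\}_{\s_j\in \Og_{W_j}}$, and use $E_\s\circ E=E_\s$ to reduce to elements of the form $A Z$ with $Z=\bigotimes_{j\le n+1}\bigotimes_{x\in W_j}P_{\s(x)}$, as was done in the proof of Theorem \ref{disintegration}. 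On such an element, $\ffi(AZ)$ becomes $\mu(\s_{\lceil\Lb_{n+1}}) \psi_{\s}(A)$, which by the tensor-product form of $\psi_\s$ in \eqref{psi} factorizes into the product $\prod_{j,x}\pi^{j}_{\s(x),\s(\overrightarrow{S(x)})}\eta^{j}_{\s(x),\s(\overrightarrow{S(x)})}(\cdots)$; summing over $\s_{n+1}$ (equivalently, applying $\id_{W_{n+1}}$ at the boundary) and using the defining equation of $\phi^{j}_{\s(x)}$ in reverse direction, these products collapse to precisely the iterated transition expectation $\El^{(0)}\big(a_{W_0}\ot \El^{(1)}(a_{W_1}\ot\cdots\ot \El^{(n)}(a_{W_n}\ot a_{W_{n+1}})\cdots)\big)$ evaluated in $\ffi$. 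The Markov property follows by Theorem \ref{caracofmarkovstate}.

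The main technical obstacle will be bookkeeping the two alternating tensor factorizations: on the one side the disintegration encodes $\s$-dependence through the compression by $P_{\s(x)}$ and the states $\eta^{j}_{\s(x),\s(\overrightarrow{S(x)})}$ on $\overline{N}^{j}_{\s(x)}\ot N^{j+1}_{\s(\overrightarrow{S(x)})}$, while on the other side the transition expectation $\El^{(j)}$ couples level $W_j$ with $W_{j+1}$ through $\phi^{j}_{\s(x)}$ on $\overline{N}^{j}_{\s(x)}\ot \Bl^{(j+1)}_{\overrightarrow{S(x)}}$. The displayed identity is precisely the compatibility condition that glues the $\eta^{j}$ of two consecutive levels into one $\phi^{j}$; verifying this gluing cleanly on localized generators, and then extending by linearity and density to all of $\Bl_L$ (exactly as in the last paragraph of the proof of Theorem \ref{disintegration}), is what requires care. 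No new probabilistic ingredient is needed, as the measure $\mu$ was already constructed to be compatible with the $\pi^{j}_{\s(x),\s(\overrightarrow{S(x)})}$.
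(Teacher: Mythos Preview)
Your proposal is correct and follows essentially the same route as the paper: both arguments construct the localized transition expectations $\El^{(j)}_x$ from the states $\phi^{j}_{\s(x)}$ via the canonical form \eqref{decoEj}, and then verify the Markov property by checking the characterization of Theorem~\ref{caracofmarkovstate} on localized generators compressed by the central projections $P_{\s(x)}$, comparing the direct evaluation of $\ffi$ through the integral $\int \psi_\s\circ E_\s\,\mu(d\s)$ with the iterated transition expectation and invoking the defining relation for $\phi^{j}_{\s(x)}$ to match the two. Your preliminary discussion of why the displayed identity really defines a \emph{state} $\phi^{j}_{\s(x)}$ (positivity sector by sector, normalization via $\sum_{\s'}\prod\pi^{j+1}=1$) is a point the paper leaves implicit, and is worth keeping.
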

\begin{proof}
For $\s\in\Og$ the state $\psi_\s$ is well-defined, in addition it's
a quantum Markov state w.r.t the sequence
$\{\El^j_{\s_{\lceil{W_j}}}\}_{(j\geq 0)}$ of the transition
expectations given by $\eqref{Elsigma}$.

Let $n\geq 0$ and take an element
$$A=\Ot_{j=0}^{n}\Ot_{x\in
W_j}P^{j}_{\overline{\s}(x)}(a_{\overline{\s}(x)}\ot
\overline{a}_{\overline{\s}})P^{j}_{\overline{\s}(x)}$$ in
$\Bl_{\Lb_n}$. Using $\eqref{caracofmarkovstate}$ it is enough to
proof that for such an element $A$ the equality
 $$\ffi(A)=\ffi\left(B\right)
$$
with
\begin{eqnarray*}
&&B=\El^{(0)}\left(A_{W_0}\ot\cdots \ot
A_{W_{n-1}}\ot\El^{(n)}\left(A_{W_{n-1}}\ot
A_{W_{n}}\right)\cdots\right)\\[2mm]
&&A_{W_j}=\Ot_{x\in
W_j}P^{j}_{\overline{\s}(x)}(a_{\overline{\s}(x)}\ot
\overline{a}_{\overline{\s}})P^{j}_{\overline{\s}(x)}.
\end{eqnarray*}
For $\s\in \Og_{\Lb_n}$ and $E_\s(A)=\delta_{\s_{\lceil \Lb_{n}},
\overline{\s}_{\lceil \Lb_{n}}}A$

$$\ffi_\s(A)=\ffi_\s\left(\El_\s^{(0)}\left(A_{\s(W_0)}\ot\cdots \ot
A_{\s(W_{n-1})}\ot\El_{\s}^{(n-1)}\left(A_{\s(W_{n-1})}\ot
A_{\s(W_{n})}\right)\cdots\right)\right),$$
where $A_{\s(x)}=a_{\s(x)}\ot \overline{a}_{\s(x)}.$

One has
\begin{eqnarray*}
\ffi_\s(i_{\Lb_n}(A))&=&\delta_{\s_{\lceil \Lb_n},
\overline{\s}_{\lceil \Lb_n}} \sum_{\s'\in
\Og_{W_{n+1}}}\psi_\s\left(\Ot_{j=1}^{n}\Ot_{x\in
W_j}(P^j_{\s(x)}(a_{\s(x)}\ot
\overline{a}_{\s(x)})P^j_{\s(x)})\right) \\[2mm]
&&\times\prod_{y\in
W_n}\eta_{\s(y),\s'(\overrightarrow{S(y)})}(\overline{a}_{\s(y)}\ot
\id_{\s'(\overrightarrow{S(y)}})\no\no &=&\delta_{\s_{\lceil \Lb_n},
\overline{\s}_{\lceil \Lb_n}}\sum_{\s'\in
\Og_{W_{n+1}}}\eta^{(0)}_{\s(x_0)}(a_{\s(x_0)})\left(\prod_{j=0}^{n-1}\prod_{x\in
W_j}\eta^{(j)}_{\s(x),\s(\overrightarrow{S(x)})}(\overline{a}_{\s(x)}\ot
a_{\s(\overrightarrow{S(x)})})\right)\no &&\times\prod_{y\in
W_n}\eta_{\s(y),\s'(\overrightarrow{S(y)})}(\overline{a}_{\s(y)}\ot
\id_{\s'(\overrightarrow{S(y)}})\no
\end{eqnarray*}
Then 
\begin{eqnarray*}
\ffi(i_{\Lb_n}(A))&=&\int_{\Og}\ffi_{\s}(i_{\Lb_{n}}(A))\mu(d\s)\no
&=& \sum_{\s'\in
\Og_{W_{n+1}}}\pi_{(x_0)}^{0}\eta^{(0)}_{(x_0)}(a_{\sb(x_0)})\prod_{j=0}^{n-1}\prod_{x\in
W_j}\pi^{j}_{\sb(x),\sb(\overrightarrow{S(x)})}
\eta^{(j)}_{(x),\sb(\overrightarrow{S(x)})}(\overline{a}_{ (x)}\ot a_{ (\overrightarrow{S(x)})})\no
&&\times\prod_{y\in W_n}\pi^{n}_{ (x), (\overrightarrow{S(x)})}\eta_{
(y),\s'(\overrightarrow{S(y)})}(\overline{a}_{\sb(y)}\ot \id_{\s'(\overrightarrow{S(y)}})\no
\end{eqnarray*}
Conversely we have
\begin{eqnarray*}
\El_{\s}^{(n-1)}\left(A_{\sb(W_{n-1})}\ot
A_{\sb(W_{n})}\right)&=&\delta_{\s_{\lceil \Lb_n},
\overline{\s}_{\lceil \Lb_n}}\Ot_{x\in
W_{n-1}}\El_{\s(x)}\left(a_{\s(x)}\ot \overline{a}_{\s(x)}\ot(
a_{\s(\overrightarrow{S(x)})}\ot
\overline{a}_{\s(\overrightarrow{S(x)})})\right)\no
&=&\delta_{\s_{\lceil \Lb_n}, \overline{\s}_{\lceil \Lb_n}}
\Ot_{x\in W_{n-1}}\sum_{\s'\in
\Og_{\overrightarrow{S(x)}}}\phi^{(n-1)}_{\s(x)}(\overline{a}_{\s(x)}\ot
(a_{\s(\overrightarrow{S(x)})}\ot \id))\no &&\times\prod_{y\in
\Og_{W_n}}\phi^{(n)}_{\s(y)}(\overline{a}_{\s(y)}\ot
\id_{\s'(\overrightarrow{S(y)})})(P^{n}_{\s(x)}a_{\s(x)}\ot \id
P^{n}_{\s(x)}\no &=&\delta_{\s_{\lceil \Lb_n}, \overline{\s}_{\lceil
\Lb_n}} \sum_{\s'\in \Og_{W_{n+1}}}\prod_{x\in
W_{n-1}}\phi^{(n-1)}_{\s(x)}(\overline{a}_{\s(x)}\ot
(a_{\s(\overrightarrow{S(x)})}\ot \id))\\[2mm]
&&\times\prod_{y\in
\overrightarrow{S(x)}}\phi^{(n)}_{\s(y)}(\overline{a}_{\s(y)}\ot
\id_{\s'(\overrightarrow{S(y)})})\Ot_{x\in
W_n}P^{n}_{\s(x)}(a_{\s(x)}\ot \id)P^{n}_{\s(x)}\no
\end{eqnarray*}

After small computation, we get
\begin{eqnarray*}
\ffi_\s(B)&=&\delta_{\s_{\lceil \Lb_n}, \overline{\s}_{\lceil
\Lb_n}}\sum_{\s'\in
\Og_{W_{n+1}}}\phi^{(0)}_{\s(x_0)}(a_{\s(x_0)})\prod_{j=0}^{n-1}\prod_{x\in
W_j}\phi^{(j)}_{\s(x)}(\overline{a}_{\s(x)}\ot
(a_{\s(\overrightarrow{S(x)})}\ot \id))\\[2mm]
&&\times\prod_{y\in W_{n}}\phi^{(n)}_{\s(y)}(\overline{a}_{\s(y)}\ot
\id_{\s'(\overrightarrow{S(y)})})\no
\end{eqnarray*}
Then by taking into account that
$\phi^{(0)}_{\s(x_0)}(\cdots)=\pi^0_{\s(x_0)}\eta^{(0)}_{\s(x_0)}(\cdot)$
in addition to the expression of
$\phi^{j}_{\s(x),\s(\overrightarrow{S(x)})}$ given in the theorem,
we obtain
\begin{eqnarray*}
\ffi(i_{\Lb_{n}}(B))&=&\int_{\Og}\ffi_\s(B)\mu(d\s)\no
&=&\sum_{\s'\in \Og_{W_{n+1}}}\phi^{(0)}_{(x_0)}(a_{\s(x_0)})\prod_{j=0}^{n-1}\prod_{x\in
W_j}\left(\phi^{(j)}_{(x)}(\overline{a}_{(x)}\ot (a_{(\overrightarrow{S(x)})}\ot
\id))\right)\prod_{y\in W_{n}}\phi^{(n)}_{(y)}(\overline{a}_{(y)}\ot
\id_{\s'(\overrightarrow{S(y)})})\no
&=&\sum_{\s'\in
\Og_{W_{n+1}}}\pi^0_{(x_0)}\eta^{(0)}_{(x_0)}(a_{\s(x_0)})\prod_{j=0}^{n-1}\prod_{x\in
W_j}\left(\pi^{(j)}_{(x), (\overrightarrow{S(x)})}\eta^{(j)}_{(x),
(\overrightarrow{S(x)})}(\overline{a}_{(x)}\ot a_{ (\overrightarrow{S(x)})})\right)\no
&&\times\prod_{y\in W_n}\pi^{(n)}_{
(y),\s'(\overrightarrow{S(y)})}\eta^{(n)}_{\sb(y),\s'(\overrightarrow{S(y)})}(\overline{a}_{ (y)}\ot
\id_{\s'(\overrightarrow{S}(y))})\no
&=&\ffi(i_{\Lb_{n}}(A)).
\end{eqnarray*}
Hence, the proof is complete.

\end{proof}
\section{Connection with statistical mechanics}

In this section we study the link between Markov states on the
Cayley tree and the Ising  potentials through the Markov property.

Let us assume that we have a locally faithful Markov state $\ffi$ on
the quasi-local algebra $\Bl_{L}$, then a potential $h_\Lb$  is
canonically defined for each finite subset $\Lb\subset L$ as
follows:
\begin{equation}\label{potential}
\ffi_{\lceil \Bl_\Lb}\; = \;Tr_{\Bl_\Lb}(e^{-h_\Lb} \cdot)
\end{equation}
The set of potentials $\{h_\Lb\}_{\Lb\subset_{fin} L}$ satisfy
normalization conditions $$\Tr_{\Bl_\Lb}(e^{-h_\Lb})\; =\; 1 $$
together with compatibility conditions $$(\Tr_{\Bl_{\Lb'\setminus
\Lb}}\ot \id_{\Bl_\Lb})(e^{-h_{\Lb'}})\; =\; e^{-h_\Lb} $$
 for finite subsets $\Lb\subset\Lb'\subset_{fin}L$.
 In particular for each $n\geq 0$, one has
  $$(\Tr_{\Bl_{W_{n+1}}}\ot \id_{\Bl_{\Lb_n}})(e^{-h_{\Lb_{n+1}}})\; =\; e^{-h_{\Lb_n}}. $$

 \begin{theorem}
 Let $\ffi$ be a locally faithful state on $\Bl_L$. Then the following assertions are
 equivalent:
 \begin{enumerate}
 \item[(i)] $\ffi$ is a Markov state  w.r.t. the localized sequence
 $\{\El^{(j)}\}_{j\geq0}$ of transition expectations;
\item[(ii)] The sequence of potentials $\{h_{\Lb_n}\}$ associated to $\ffi$ by $\eqref{potential}$, can be recovered by
\begin{equation}\label{potentialequation}
 h_{\Lb_n}= H_{W_0} + \sum_{j=0}^{n-1}H_{W_j,W_{j+1}} + \Ht_{W_n}
 \end{equation}
where the sequences $\{H_{W_{j}}\}_{j\geq
0},\;\{\Ht_{W_{j}}\}_{j\geq 0}$
and $\{H_{W_{j}, W_{j+1}}\}_{j\geq 0}$ of self-adjoint operators localized in $\Bl_{W_j}$ and
$\Bl_{\Lb_{j,j+1}}$,
 respectively, and satisfying commutation relations
\begin{eqnarray}\label{commutation}
&&[H_{W_n}, H_{W_n,W_{n+1}}]=0,\quad  [ H_{W_n,W_{n+1}},
\Ht_{W_{n+1}}]=0,\nonumber\\ &&[H_{W_n}, \widehat{H}_{W_n}]=0,\quad
[H_{W_n,W_{n+1}}, H_{W_{n+1},W_{n+2}}]=0.
 \end{eqnarray}
 \end{enumerate}
 \end{theorem}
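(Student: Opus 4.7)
The plan is to prove both directions by combining the localized structure of $\El^{(j)}$ from Section~4 with local density-matrix calculations. For $(i)\Rightarrow(ii)$, I will work by induction on $n$. Local faithfulness of $\ffi$ makes each $h_{\Lambda_n}$ a well-defined self-adjoint operator via $\eqref{potential}$, and I set $H_{W_0}:=h_{\Lambda_0}$ and $\Ht_{W_0}:=0$ as the base case. For the inductive step, the Markov identity $\ffi_{\lceil\Bl_{\Lambda_{n+1}}}=\ffi_{\lceil\Bl_{\Lambda_n}}\circ E_n$, combined with the localized factorization $\El^{(n)}=\bigotimes_{x\in W_n}\El_x^{(n)}$ and the canonical form $\eqref{decoEj}$, yields, after passing to densities, a recursion of the schematic form
$$ e^{-h_{\Lambda_{n+1}}} \;=\; e^{-h_{\Lambda_n}}\,K^{(n)}, $$
where $K^{(n)}\in\Bl_{\Lambda_{[n,n+1]}}$ is a positive operator built out of the densities of the states $\phi_{\omega_x}$, $x\in W_n$, entering $\El_x^{(n)}$. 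Its logarithm, split appropriately, produces the nearest-level interaction $H_{W_n,W_{n+1}}$ and absorbs a new right-boundary correction $\Ht_{W_{n+1}}$ that replaces $\Ht_{W_n}$ in the decomposition $\eqref{potentialequation}$.

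The commutation relations $\eqref{commutation}$ follow from the range-algebra structure analyzed in Section~4. Each site-wise contribution $K_x^{(n)}$ entering $H_{W_n,W_{n+1}}$ is a positive element of $\Bl_x\otimes\Bl_{\overrightarrow{S(x)}}$ whose support is controlled by the central projections $\{P_{\omega_x}\}$ of the range of $\El_x^{(n)}$. These central projections commute with the boundary term $H_{W_n}$ (by the $\Bl_{W_n}$-module property of $E_{n-1}$) and with the preceding interaction $H_{W_{n-1},W_n}$ (whose own support lies in the compatible central subalgebra generated by $\{P_{\omega_x}\}_{x\in W_n}$). The relation $[H_{W_n,W_{n+1}},H_{W_{n+1},W_{n+2}}]=0$ is the deepest: it reflects that both operators, when restricted to $\Bl_{W_{n+1}}$, take values in the commutative subalgebra generated by $\{P_{\omega_y}\}_{y\in W_{n+1}}$. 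This is precisely the Ising-type competing-interactions condition foreshadowed in the introduction, and it is the point at which the localization hypothesis is indispensable.

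For the converse $(ii)\Rightarrow(i)$, I will reverse the construction. Given Hamiltonians satisfying $\eqref{potentialequation}$ and $\eqref{commutation}$, define the candidate transition expectation by
$$ \El^{(n)}(a)\;:=\;\Tr_{\Bl_{W_{n+1}}}\bigl(e^{\Ht_{W_n}}\,e^{-H_{W_n,W_{n+1}}}\,e^{-\Ht_{W_{n+1}}}\,a\bigr). $$
The commutation relations ensure first that this is a bona fide conditional expectation onto $\Bl_{W_n}$ (the exponentials multiply as if in a commutative algebra on the relevant support projections), and second that the exponential factorizes as a tensor product of site-wise pieces, giving the localized form $\El^{(n)}=\bigotimes_{x\in W_n}\El_x^{(n)}$. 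The Markov identity $\eqref{Mp1}$ for the resulting sequence is then a direct verification via Theorem~\ref{caracofmarkovstate}, using the compatibility of the densities $e^{-h_{\Lambda_n}}$.

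The main obstacle lies in the $(i)\Rightarrow(ii)$ inductive step, specifically in isolating the new boundary correction $\Ht_{W_{n+1}}$ from the density $K^{(n)}$ and verifying that it commutes with both $H_{W_{n+1}}$ and $H_{W_n,W_{n+1}}$. This requires a careful analysis of the partial trace $\Tr_{\Bl_{W_{n+1}}}$ applied to the inductive relation: one must show that $\Ht_{W_{n+1}}$ lies in the centralizer of the commutative algebra generated by the central projections $\{P_{\omega_y}\}_{y\in W_{n+1}}$ of the relevant range algebras. Only then does the recursion close consistently across levels and produce commuting potentials of the required form.
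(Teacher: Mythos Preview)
Your approach differs substantially from the paper's in the direction $(i)\Rightarrow(ii)$. The paper does \emph{not} argue by induction on $n$ at the level of densities. Instead it invokes the disintegration theorem (Theorem~\ref{disintegration}) proved in Section~6: the Markov state is written as $\ffi=\int_\Omega\psi_\sigma\circ E_\sigma\,\mu(d\sigma)$, where each fiber $\psi_\sigma$ is a genuine \emph{product} state on $\Bl_\sigma$. Because $\psi_\sigma$ is a product, its local Hamiltonian $h_{\sigma_{\lceil\Lambda_n}}$ is automatically an additive sum of pieces $h^0_{\sigma(x_0)}$, $h^j_{\sigma(x),\sigma(\overrightarrow{S(x)})}$, $\hat h^n_{\sigma(y)}$ acting on disjoint tensor factors---so all required commutation relations hold trivially on each fiber. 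The operators $H_{W_0}$, $H_{W_j,W_{j+1}}$, $\widehat H_{W_n}$ are then obtained by summing these fiber potentials against the minimal central projections $P_{\sigma(x)}$, and the commutators \eqref{commutation} follow because distinct central projections are orthogonal. In other words, the paper outsources the hard commutativity assertion to the disintegration machinery of Sections~5--6; what you flag as ``the main obstacle'' (extracting a commuting boundary correction $\Ht_{W_{n+1}}$ from a bare density recursion $e^{-h_{\Lambda_{n+1}}}=e^{-h_{\Lambda_n}}K^{(n)}$) is never confronted directly.

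Your inductive plan is not wrong in spirit, but as written it has a real gap at exactly the point you identify. From the Markov property one gets a relation between densities, but the factorization $e^{-h_{\Lambda_{n+1}}}=e^{-h_{\Lambda_n}}K^{(n)}$ with $K^{(n)}$ \emph{commuting} with $e^{-h_{\Lambda_n}}$ is precisely what must be proved, not assumed; without it one cannot take logarithms additively, and the decomposition \eqref{potentialequation} does not follow. Establishing this commutativity directly requires essentially redoing the central-projection analysis that the disintegration theorem packages for you. For $(ii)\Rightarrow(i)$ your construction is close to the paper's, but note that the paper uses the two-sided implementation $\El^{(n)}_x(a)=(\Tr_{\overrightarrow{S(x)}}\otimes\id_x)(A^*_{x,\overrightarrow{S(x)}}\,a\,A_{x,\overrightarrow{S(x)}})$ with $A_{x,\overrightarrow{S(x)}}=e^{-\frac12 H_{x,\overrightarrow{S(x)}}}e^{-\frac12\widehat H_{\overrightarrow{S(x)}}}e^{\frac12\widehat H_x}$, which makes complete positivity manifest; your one-sided formula $\Tr_{W_{n+1}}(e^{\Ht_{W_n}}e^{-H_{W_n,W_{n+1}}}e^{-\Ht_{W_{n+1}}}a)$ is completely positive only if the product of the three exponentials is positive, and the listed relations \eqref{commutation} do not include $[\Ht_{W_n},H_{W_n,W_{n+1}}]=0$, so this needs separate justification.
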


\begin{proof} $(i)\Rightarrow (ii)$. Let $\varphi$ be a locally faithful Markov state w.r.t. the
sequence
 $\{\El^{j}\}_{j\geq0}$ of transition expectations. For every $j\geq 0$, $x\in W_j$ and $\sigma\in
\Omega$ we define the following set of potentials $\{h^{0}_{\sigma(x_0)}\}, \{h^{j}_{\sigma(x),
\sigma(\overrightarrow{S(x)})}\}$ and
$\{\hat{h}^{(j)}_{\sigma(x)}\}$ related to the following positive
functionals. Namely, the potential $h^{0}_{\sigma(x_0)}$ is related
to $\pi^0_{\sigma(x_0)}\eta^{0}_{\sigma(x_0)}$ on $N^{0}$, the
potential   $h^{j}_{\sigma(x), \sigma(\overrightarrow{S(x)})}$ is
related to
$\pi^{j}_{\s(x),\s(\overrightarrow{S(x)}}\eta^{j}_{\s(x),\s(\overrightarrow{S(x)}}$
on $\overline{N}^{j}_{\s(x)}\ot N^{j+1}_{\s(\overrightarrow{S(x)})}$
and $\{\hat{h}_{\sigma(x)}\}$ is related to $$\sum_{\s'\in
\Og_{\overrightarrow{S(x)}}
}\pi^{j}_{\s(x),\s'(\overrightarrow{S(x)})}\eta^{j}_{\s(x),
\s'(\overrightarrow{S(x)})}(\cdot\ot \id)$$ on
$\overline{N}^{j}_{\s(x)}$.
 The potential $\{\hat{h}^{(j)}_{\sigma(x)}\}$ is related to
$$\sum_{\s'\in
\Og_{\overrightarrow{S(x)}}}\pi^{j}_{\s(x),\s'(\overrightarrow{S(x)})}\eta^{j}_{\s(x),\s'(\overrightarrow{S(x)})}$$
Take any localized element
$$a=\Ot_{j=0}^{n}\ot_{x\in W_j}a_{\s(x)}\ot \overline{a}_{\s(x)}$$
from $\Bl_{\s(\Lb_n)}$. Then one has
\begin{eqnarray*}
\varphi_{\s}(a)&=&\psi_{\s}(a)\no &=&
\eta^{0}_{x_0}(a_{\s(x_0)})\prod_{j=0}^{n-1}\prod_{x\in
W_j}\eta^{j}_{\s(x),
\s(\overrightarrow{S(x)})}(\overline{a}_{\s(x)}\ot
a_{\overrightarrow{S(x)}}) \prod_{y\in W_n}\eta^{j}_{\s(y),
\s(\overrightarrow{S(y)})}(\overline{a}_{\s(x)}\ot
\id_{\overrightarrow{S(x)}})\no &=&\Tr
(\textrm{e}^{-h^{0}_{\s(x_0)}}a_{\s(x_0)})\prod_{j=0}^{n-1}\prod_{x\in
W_j}\Tr(\textrm{e}^{-h^{j}_{\s(x),
\s(\overrightarrow{S(x)})}}\overline{a}_{\s(x)}\ot
a_{\overrightarrow{S(x)}})\prod_{y\in
W_n}\Tr(\textrm{e}^{-\widehat{h}^{n}_{\s(y)}}\overline{a}_{\s(x)})
\end{eqnarray*}

While in the last expression the traces are taken on disjoint
tensors,  we then get
$$\varphi_{\s}(a)= \Tr(\textrm{e}^{-h_{\s_{\lceil\Lb_n}}}a)$$
with
$$h_{\s_{\lceil\Lb_n}}=h^{0}_{\s(x_0)}+\sum_{j=0}^{n-1}\sum_{x\in
W_j}h^{j}_{\s(x),\s(\overrightarrow{S(x)})}+ \sum_{y\in
W_n}\widehat{h}^{n}_{\s(y)}$$ Then we define
 \begin{eqnarray*}
 &&H_0:=\sum_{\s\in \Og_{W_j}}P^{j}_{\s(x)}(h^{j}_{\s(x)}\ot \id)P^{j}_{\s(x)}\no
&&H_{x,\overrightarrow{S(x)}}:=\sum_{\s\in \Og_{\{x,\overrightarrow{S(x)}\}}}(P^{j}_{\s(x)}\ot
P^{j+1}_{\s(\overrightarrow{S(x)})})(\id\ot h^{j}_{\s(x),\s(\overrightarrow{S(x)})}
\ot \id)(P^{j}_{\s(x)}\ot P^{j+1}_{\s(\overrightarrow{S(x)})})\no
   &&\widehat{H}_x:=\sum_{\s\in \Og_{x}}P^{j}_{\s(x)}(\id\ot \widehat{h}^{j}_{\s(x)})P^{j}_{\s(x)}
 \end{eqnarray*}

 and
  \begin{eqnarray*}
  &&H_{j,j+1}=\sum_{x\in W_j}H_{x,\overrightarrow{S(x)}}\no
   &&\widehat{H}_j:=\sum_{x\in W_{j}}\widehat{H}_x
 \end{eqnarray*}
 Finally we set $$h_{\L_n}=H_0 + \sum_{j=0}^{n-1}H_{j,j+1}+ \widehat{H}_{n}.$$
 Then $h_{\L_n}$ is the  potential related to the state $\varphi$ on $\Bl_{\L_n}$.
$(ii)\Rightarrow (i)$. For $ n\in \mathbf N, x\in W_n$ we consider the map
$\El^{(n)}_{x}:\Bl_{x}\ot \Bl_{\overrightarrow{S(x)}}\to \Bl_{x}$
defined by
$$\El^{(n)}_x(a)=(\Tr_{\overrightarrow{S(x)}]}\ot \id_{x})(A^{*}_{x, \overrightarrow{S(x)}}a
A_{x,\overrightarrow{S(x)}})$$
with $$A_{x,\overrightarrow{S(x)}}=\textrm{e}^{-\frac{1}{2}H_{x,
\overrightarrow{S(x)}}}\textrm{e}^{-\frac{1}{2}\widehat{H}_{
\overrightarrow{S(x)}}}\textrm{e}^{\frac{1}{2}\widehat{H}_{x}}$$
considering $K_{n,n+1}=\Ot_{x\in W_n}A_{x,\overrightarrow{S(x)}}$
(or also by taking $K_{n,n+1}=\textrm{e}^{-\frac{1}{2}H_{n,
n+1}}\textrm{e}^{-\frac{1}{2}\widehat{H}_{
n+1}}\textrm{e}^{\frac{1}{2}\widehat{H}_{n}}$). We get a transition
expectation
$$\El^{(n)}(a)=(\Tr_{W_{n+1}]}\ot \id_{W_n})(K^{*}_{n, n+1}a K_{n,n+1})$$
and its corresponding quasi-conditional expectation is defined
by
$$E^{n}=\id_{\Bl_{\Lb_{n-1}}}\ot \El^{(n)}.$$
One can easily check that $\varphi$ is a Markov state w.r.t the
sequence $\{E^{n}\}$ of conditional expectations.
\end{proof}

\section*{Acknowledgments}
The authors are grateful to professors L. Accardi and F. Fidaleo for
their fruitful discussions and useful suggestions.

\end{document}